\newtheorem{proposition}{Proposition}
\newtheorem{definition}{Definition}
\newtheorem{theorem}{Theorem}
\newcommand\mydots{\hbox to 0.8em{.\hss.\hss.}}
\newcommand{\reals}{{\mathbb{R}}}
\newcommand{\score}{\mathit{score}}
\newcommand{\av}{\mathit{av}}
\newcommand{\app}{{\mathrm{app}}}
\newcommand{\hamming}{\mathrm{ham}}
\DeclareMathOperator{\ham}{ham}
\newcommand{\kso}[1]{\todo[color=blue!30, inline]{KSo: #1}}
\newcommand{\np}{{{\mathrm{NP}}}}
\title{How to Sample Approval Elections?}
\date{}
\author[1,2]{Stanisław Szufa}
\author[1]{Piotr Faliszewski}
\author[1]{Łukasz Janeczko}
\author[3]{Martin Lackner}
\author[4]{Arkadii Slinko}
\author[1]{Krzysztof Sornat}
\author[5]{Nimrod Talmon}
\affil[1]{AGH University, Kraków, Poland}
\affil[2]{Jagiellonian University, Kraków, Poland}
\affil[3]{Technische Universität, Wien, Austria}
\affil[4]{University of Auckland, Auckland, New Zealand} 
\affil[5]{Ben Gurion University, Be’er Sheva, Israel}
\begin{document}

\maketitle

\pagestyle{plain}

\begin{abstract}
%   We extend the map-of-elections framework to the case of approval
%   elections. While doing so, we study a number of statistical
%   cultures, including some new ones, and we analyze their
%   properties. We find that approval elections can be understood in
%   terms of the average number of approvals in the votes, and the
%   extent to which the votes are chaotic.
    We study the multifaceted question of how to sample approval
    elections in a meaningful way. Our analysis aims to discern 
    the properties of various statistical cultures (both established 
    and new ones). Based on the map-of-elections framework 
    by~\citet{szu-fal-sko-sli-tal:c:map},
    we graphically represent statistical cultures; and, by that, provide an intuitive understanding of their differences and properties.
\end{abstract}

\section{Introduction}

In an approval election~\cite{bra-fis:b:approval-voting}, each voter
indicates which candidates he or she finds acceptable for a certain
task (e.g., to be a president, to join the parliament, or to enter the
final round of a competition) and a voting rule aggregates this
data into the final outcome.
%
% is used to aggregate these preferences and determine
% the winner(s).
%
In the single-winner setting (e.g., when choosing the president), the
most popular rule is to pick the candidate with the highest number of
approvals.  In the multiwinner setting (e.g., in parliamentary
elections or when choosing finalists in a competition) there is a rich
spectrum of rules to choose from, each with different properties and
advantages.
Approval voting is particularly attractive due to its simplicity and
low cognitive load imposed on the voters.
% even if there are many
% candidates to choose from, usually it is not too difficult to identify
% those few that one supports.
%
% (this stands in sharp
%contrast with ordinal voting where the voters has to rank all the
%candidates, including those that he or she has barely heard
%of).
%
Indeed,
its practical applicability %of approval voting
has already been tested in a number of field experiments, including
those in
France~\cite{las-str:j:approval-experiment,bau-ige:b:french-approval-voting,bou-bla-bau-dur-ige-lan-lar-las-leb-mer:t:french-approval-voting-2017}
and Germany~\cite{alo-gra:j:german-approval-voting}.  Over the recent
years there was also tremendous progress regarding its theoretical
properties (see, e.g., the overviews of
\citet{las-san:chapter:approval-multiwinner} and
\citet{lac:sko:t:approval-survey}).

% Practical applicability of approval voting has already been tested in
% a number of field experiments, including those in
% France~\cite{las-str:j:approval-experiment,bau-ige:b:french-approval-voting,bou-bla-bau-dur-ige-lan-lar-las-leb-mer:t:french-approval-voting-2017}
% and Germany~\cite{alo-gra:j:german-approval-voting}.  We are also
% witnessing remarkable progress regarding theoretical understanding of
% approval elections. For example, in recent years researchers
% introduced and analyzed various notions related to proportional
% representation~\cite{justifiedRepresentation,Sanchez-Fernandez2017Proportional,sko:c:prop-degree},
% studied strategyproofness~\cite{pet:c:approval-mw-strategy-proof} and
% monotonicity~\cite{san-fis:c:approval-monotonicity}, and provided
% strong axiomatic
% characterizations~\cite{lac-sko:j;multiwinner-approval-axioms}.  There
% is also an active line of work on the computational complexity of
% approval-based voting (see, e.g., the overview of
% \citet{bau-erd-hem-hem-rot:b:computational-apects-of-approval-voting}
% and the works of
% \citet{bar-gou-lan-mon-rie:c:possible-approval-winners},
% \citet{azi-gas-gud-mac-mat-wal:c:approval-multiwinner},
% \citet{sko-fal-lan:j:collective}, or
% \cite{bre-fal-kac-kno-nie:c:parameterized-collective}).

In spite of all these achievements, numerical experiments regarding
approval voting are still challenging to design. One of the main
difficulties is caused by the lack of consensus on which statistical
election models to use. Below we list a few models (i.e., statistical
cultures) that were recently used:

% there is no consensus 

% as to which statistical
% models of elections to use and how to set their parameters. As a
% consequence, many authors use different models and their results are
% not easy to compare:
\begin{enumerate}
\item In the impartial culture setting, we assume that each vote is
  equally likely. Taken literally, this means that each voter approves
  each candidate with
  probability~$\nicefrac{1}{2}$ \cite{bar-lan-yok:c:hamming-approval-manipulation}.
  As this is quite unrealistic, several authors treat the approval
  probability as a
  parameter~\cite{bre-fal-kac-nie2019:experimental_ejr,fal-sli-tal:c:vnw}
  or require that all voters approve the same (small) number of
  candidates~\cite{lac-sko:j:av-vs-cc}. A further refinement is to
  choose an individual approval probability for each
  candidate~\cite{lac-mal:c:vnw-shortlisting}.

\item In Euclidean models, each candidate and voter is a point in
  $\mathbb{R}^d$, where $d$ is a parameter, and a voter approves a
  candidate if they are sufficiently near. Such models are used, e.g.,
  by \citet{bre-fal-kac-nie2019:experimental_ejr} and
  \citet{god-bat-sko-fal:c:2d}. Naturally, the distribution of the
  candidate and voter points strongly affects the outcomes.
  % , or the
  % candidates and voters may have individual radii and a voter approves
  % a candidate if their distance is smaller than the sum of their
  % radii~\cite{god-bat-sko-fal:c:2d}.

\item Some authors consider statistical cultures designed for the
  ordinal setting (where the voters rank the candidates from the most
  to the least desirable one) and let the voters approve some
  top-ranked candidates (e.g., a fixed number of them). This approach
  is taken, e.g., by \citet{lac-sko:j:av-vs-cc} on top of the ordinal
  Mallows model (later on, \citet{caragiannis2022evaluating} provided
  approval-based analogues of the Mallows model).
\end{enumerate}
Furthermore, even if two papers use the same model, they often choose its
parameters differently. Since it is not clear how the parameters
affect the models, comparing the results from different papers is not
easy.
%
% Many authors also design ad-hoc models, meant to capture the specific
% phenomena that they study (see, e.g., the works of
% \citet{bar-lan-yok:c:hamming-approval-manipulation} and
% \citet{lac-mal:c:vnw-shortlisting}).  Worse yet, they often use
% different parameters for the models, so it is difficult to compare
% results from different papers.
%
% More importantly, it is not clear
% what parameters to use for all these models to capture realistic
% scenarios.
%
%
Our goal is to initiate a systematic study of approval-based
statistical cultures and attempt to rectify at least some of these
issues.  We do so by extending the map-of-elections framework of
\citet{szu-fal-sko-sli-tal:c:map} and
\citet{boe-bre-fal-nie-szu:c:compass} to the approval setting.

%\nimrod{maybe nicer to explain the map-of-elections more detailed next?}

Briefly put, a map-of-elections is a set of elections with a distance
between each pair.
% (since the elections are often generated randomly,
% the metric used to obtain these distances needs to be independent of
% voter and candidate renaming)\nimrod{I think the text in parenthesis
%   is only confusing}.
Such a set of elections can then be embedded in a
plane, by representing each election as a point and embedding the
distance between the elections into Euclidean distances in the
plane. Such a map-of-elections is useful as, by analyzing the
distances between elections generated from various models, we can
often obtain some insights about their nature.

%% The map framework relies on having a metric over elections, and it
%% requires this metric to be independent of renaming the candidates and
%% voters (the names are irrelevant in randomly generated elections).
%
% The first step %toward this goal
% is to choose a metric that
% % To this end, we need a metric that
% given two approval election (with the same number of candidates and
% the same number of voters) indicates how similar they are. Since our
% elections are sampled from statistical models, this metric must be
% independent of renaming the candidates and voters (these names do not
% carry any meaning in such elections).
% %
%
%%
%%We identify two such metrics, the {\em isomorphic Hamming distance} and the
%%{\em approvalwise distance}. 

To create a map-of-elections for approval elections,
we start by identifying two metrics between approval elections, 
the {\em isomorphic Hamming distance} and the
{\em approvalwise distance}. 
The first one is accurate, but difficult to compute,
whereas the second one is less precise, but easily computable.
Fortunately, %we find that 
%the elections that we consider 
in our election datasets
the two
metrics are strongly correlated; thus, we use the latter one.

Next, we analyze the space of approval elections with a given number
of candidates and voters.  For each $p \in [0,1]$, by $p$-identity
($p$-ID) elections we mean those
where all the votes are identical and approve the same $p$-fraction of
candidates;
and by $p$-impartial culture ($p$-IC) elections those where each voter chooses to approve each
candidate with probability $p$.  We view $p$-ID and
$p$-IC elections as two extremes on the spectrum of agreement between
the voters and, intuitively, we expect that every election (where each
voter approves on average a $p$ fraction of candidates) is located
somewhere between these two.  In particular, for $p, \phi \in [0,1]$,
we introduce the $(p,\phi)$-resampling model, which generates
elections whose expected approvalwise distance from $p$-ID is exactly
the $\phi$ fraction of the distance between $p$-ID and $p$-IC (and the
expected distance from $p$-IC is the $1-\phi$ fraction). % of this value).

Armed with these tools, we proceed to draw maps of
elections. First, we consider $p$-ID, $p$-IC, and
$(p,\phi)$-resampling elections, where the $p$ and $\phi$ values are
chosen to form a grid, and compute the approvalwise distances between
them. 
% Second, for each election we compute a point on a plane, so that
% the Euclidean distances between these points resemble the approvalwise
% distances between the respective elections as much as possible (we use
% the same algorithm for this purpose as
% \citet{boe-bre-fal-nie-szu:c:compass}). 
We find that, for a fixed value
of $p$, the $(p,\phi)$-resampling elections indeed form lines between
the $p$-ID and $p$-IC ones, whereas for fixed $\phi$ values they form
lines between $0$-ID and $1$-ID ones (which we refer to as the
\emph{empty} and \emph{full} elections).  We obtain more maps by
adding elections generated according to other statistical
cultures; the presence of the \emph{$(p,\phi)$-resampling grid} helps
in understanding the locations of these new elections.
For each of our elections we compute several parameters, such as, e.g,
the highest number of approvals that a candidate receives, the time
required to compute the results of a certain multiwinner voting rule,
or the cohesiveness level (see Section~\ref{sec:prelims} for a
definition). For each of the statistical cultures, we present maps
where we color the elections according to these values. This gives
further insight into the nature of the elections they generate.
Finally, we compare the results for randomly generated elections with
those appearing in real-life, in the context of participatory
budgeting.

\section{Preliminaries}\label{sec:prelims}
For a given positive integer $t$, we write $[t]$ to denote the set
$\{1,2,\dots,t\}$, and $[t]_0$ as an abbreviation for
$[t] \cup \{0\}$.  

%Beta distribution is is defined over interval
%$[0,1]$ and parameterized by two positive real numbers, $\alpha$ and
%$\beta$. Its probability density function is
%$
%f(x;\alpha,\beta) = \textstyle \frac{1}{B(\alpha,\beta)} x^{\alpha-1}(1-x)^{\beta-1}
%$
%where $ B = \frac{\Gamma(a) \Gamma(b)}{\Gamma(a+b)}$, and %$\Gamma$ is the Gamma function.
    
\paragraph{Elections.}

A (simple) approval election $E=(C,V)$ consists of a set of candidates
$C=\{c_1,\dots,c_m\}$ and a collection of voters
$V=(v_1,\dots,v_n)$. Each voter $v \in V$ casts an approval ballot,
i.e., he or she selects a subset of candidates that he or she
approves.  Given a voter $v$, we denote this subset by
$A(v)$. Occasionally, we refer to the voters or their approval
ballots as votes; the exact meaning will always be clear from the
context.
%
% Formally, we identify voters with their binary approval vector: $v=(a_1,\dots,a_m)$ where % $a_i$ (for $i\in [m]$) is 1 if voter $v$ approves candidates $c_i$ and 0 otherwise. Let  $A(v)$
%denote the approval ballot of $v$, i.e., the set of all candidates approved by voter $v$, and let $A^c(v)=C\setminus A(v)$ denote its complement.
%By $\hat{v}=\frac{v}{\sum v}$ we denote the normalized approval ballot,
%where all the ones are divided by the number of approved candidates, 
%so the sum of $\hat{v}$ adds up to $1$.
%
%\begin{example}
%    Let $C=\{c_1,c_2,c_3\}$, and let voter $v$ approves $c_1$ and $c_3$. Then, the approval ballot is $A(v)=\{c_1,c_3\}$, the approval vector is $v=(1,0,1)$, and the normalized %approval ballot is $\hat{v}=(\nicefrac{1}{2},0,\nicefrac{1}{2})$.
%\end{example}
%
An approval-based committee election (an ABC election) is a triple
$(C,V,k)$, where $(C,V)$ is a simple approval election and $k$ is the
size of the desired committee.  We use simple elections when the goal
is to choose a single individual, and ABC elections when we seek a
committee.

% In the single-winner setting, the goal is to choose a single individual and, so,
% it suffices to use simple approval elections.
% In the multiwinner setting we aim to select a committee and, thus, we need to indicate
% its size. 
%
%A specific type of election are multi-winner elections, where the goal is to determine a fixed-size subset of candidates, a so-called committee. These elections additionally require a desired committee size $k$. 
%
% Formally, an approval-based committee election (an ABC election) is a triple $(C,V,k)$,
% where (C,V) is a simple approval election and $k$ is the committee size.

Given an approval election $E$ (be it a simple election or an ABC
election) and a candidate $c$, we write $\score_\av(c)$ to denote the
number of voters that approve $c$. We refer to this value as the
\emph{approval score} of $c$. The single-winner approval rule (called AV)
returns the candidate with the highest approval score (or the set of
such candidates, in case of a tie).

\paragraph{Distances Between Votes.}
%We briefly mention two distance measures between votes~\cite{caragiannis2022evaluating}.

For two voters $v$ and $u$, their Hamming distance is \linebreak
$\ham(v, u) = |A(v) \triangle A(u)|$, i.e.,  the number of
candidates approved by exactly one of them. Other distances include,
e.g., the Jaccard one, defined as
$\mathrm{jac}(v, u) = \frac{\ham(v,u)}{|A(v) \cup A(u)|}$.
For other  examples of such distances, 
we point to the work
of \citet{caragiannis2022evaluating}.

% or the
% Zelinka one,
% $\mathrm{zel}(v, u) = \max\{|A(v)\setminus A(u)|, |A(u)\setminus
% A(v)|\}$.  See the work of \citet{caragiannis2022evaluating} for
% applications of these distances in the context of approval elections.

% \begin{itemize}
%     \item Hamming: $\ham(v, u) = |A(v) \triangle A(u)|$ -- the number of candidates that are approved by exactly one of the voters.
%     \item Jaccard: $jac(v, u) = \frac{\ham(v,u)}{|A(v) \cup A(u)|}$
%     \item Zelinka: $zel(v, u) = \max\{|A(v)\setminus A(u)|, |A(u)\setminus A(v)|\}$
%     \item Bunke-Shearer: $bus(v, u) =  \frac{zel(v,u)}{\max\{|A(v)|, |A(u)|\}}$
% \end{itemize}
% In our work, we mostly use the Hamming distance, but discuss 
% others as well (see Section~\ref{???}).
% \kso{remove it?}

\paragraph{Approval-Based Committee Voting Rules.}

An \emph{approval-based committee voting rule} (an ABC rule) is a function that maps an ABC election $(C,V, k)$ to a nonempty set of committees of size $k$.
If an ABC rule returns more than one committee, then we consider them tied.
%
%for a given ABC election, we say that these committees are \emph{tied}.
% Due to this definition, we assume that ABC rules are resolute, i.e., return exactly one winning committee.
% If an ABC rule is \emph{irresolute}, it instead maps to a set of one or more tied committees.

We introduce two prominent ABC rules. \emph{Multiwinner Approval  Voting (AV)} selects the $k$ candidates with the highest approval scores.
Given a committee $W$, its approval score is the sum of the scores of its members; $\score_\av(W) = \sum_{w \in W}\score_\av(w)$.
If there is more than one committee that achieves a maximum score, AV returns all tied committees.
The second rule is \emph{Proportional Approval Voting (PAV)}. PAV outputs all committees with maximum PAV-score:
\[ \textstyle
  \score_{\text{pav}}(W)=\sum_{v\in V} h(|A(v)\cap W|),
\]
where $h(x)=\sum_{j=1}^x \nicefrac{1}{j}$ is the harmonic function.
Intuitively, AV selects committees that contain the ``best'' candidates (in the sense of most approvals) and PAV selects committees that are in a strong sense proportional \cite{justifiedRepresentation}.
In contrast to AV, which is polynomial-time computable, PAV is NP-hard to compute \cite{azi-gas-gud-mac-mat-wal:c:approval-multiwinner,sko-fal-lan:j:collective}.
In practice, PAV can be computed  by solving an integer linear program \cite{jair/spoc} or by an approximation algorithm~\cite{DudyczMMS20-tight-pav-apx}.

\iffalse
AV belongs to the class of Thiele methods. A Thie\-le method is defined by a scoring function $w$ mapping non-negative integers to reals; it returns all committees $W$ that achieve a maximum $w$-score, defined as: 
\[ \textstyle
  \score_w(W)=\sum_{v\in V} w(|A(v)\cap W|).
\]
Observe that AV is defined by $w(x)=x$.
Two other major Thiele methods are Proportional Approval Voting (PAV) and Chamberlin--Courant (CC), which are defined by \[w_{\mathit{pav}}(x)=\sum_{j=1}^x \nicefrac{1}{j} \quad\text{ and }\quad w_{\mathit{cc}}(x)=\begin{cases}1 & \text{if $x\geq 1$,}\\ 0 & \text{otherwise.}\end{cases}\]
Intuitively, AV selects committees that contain the ``best'' candidates (in the sense of most approvals), PAV selects committees that are in a strong sense proportional \cite{justifiedRepresentation}, and 
CC selects committees that maximally representative (in the sense that most voters approve at least
one candidate).
\fi

\paragraph{Cohesive Groups.}

%From an intuitive point of view
Intuitively, a proportional committee should represent all groups of voters in a way that (roughly) corresponds to their size.
To speak of proportional committees in ABC elections, \citet{justifiedRepresentation} introduced the concept of \emph{cohesive groups}.
\begin{definition}
Consider an ABC election $(C,V,k)$ with $n$ voters and some non-negative integer $\ell$.
A group of voters $V'\subseteq V$ is \emph{$\ell$-cohesive} if
(i) $|V'| \geq \ell\cdot \frac{n}{k}$
and
(ii) $\left|\bigcap_{v \in V'} A(v) \right| \geq  \ell$.
\end{definition}
An $\ell$-cohesive group is large enough to deserve $\ell$ representatives in the committee and is cohesive in the sense that there are $\ell$ candidates that can represent it.
A number of proportionality notions have been proposed based on cohesive groups,
such as 
(extended) justified representation~\cite{justifiedRepresentation}, proportional justified representation~\cite{Sanchez-Fernandez2017Proportional}, proportionality degree~\cite{sko:c:prop-degree}, and others.
For our purposes, it is sufficient to note that all these concepts guarantee cohesive groups different types of representations represented
(see also the  survey of \cite{lac:sko:t:approval-survey} for a comprehensive overview).

% Let $q$ denote the total number of approvals in all the votes (if two voters approved the same candidate, it counts as two approvals). Let $a^i$ denote the number of candidates that are approved by $i$ voters. By $A=(\frac{a^0}{t},\frac{a^1}{t},\dots,\frac{a^n}{t})$ we denote the global-approval-frequency vector, which contains $a^i$ values for $i\in[t]\cup{0}$ normalized by the total number of approvals $q$.

\section{Statistical Cultures for Approval Elections}
\label{sec:statistical-cultures}

In the following, we present several statistical cultures (probabilistic models) for
generating approval elections.  Our input consists of the desired
number of voters $n$ and a set of candidates $C=\{c_1,\dots,c_m\}$.
For models that already exist in the literature, we provide examples of
papers that use them.

% each model we indicate if it already appeared in the literature or
% if it is introduced in this paper.

\paragraph{Resampling, IC, and ID Models.}
Let $p$ and $\phi$ be two numbers in $[0,1]$. In the
\emph{$(p,\phi)$-resampling} model, we first draw a central
ballot~$u$, by choosing $\lfloor p \cdot m\rfloor$ approved candidates
uniformly at random; then, we generate each new vote $v$ by initially
setting $A(v) = A(u)$ and executing the following procedure for every
candidate $c_i \in C$: With probability $1-\phi$, we leave $c_i$'s
approval intact and with probability $\phi$ we resample its value
(i.e., we let $c_i$ be approved with probability $p$). The resampling
model is due to this paper and is one of our basic tools for analyzing
approval elections.  By fixing $\phi = 1$, we get the
\emph{$p$-impartial culture} model ($p$-IC) where each
candidate in each vote is approved with probability $p$; it was used,
e.g., by \citet{bre-fal-kac-nie2019:experimental_ejr} and
\citet{fal-sli-tal:c:vnw}.  By fixing $\phi = 0$, we ensure that all
votes in an election are identical (i.e., approve the same $p$ fraction of the candidates). We refer this model
as \emph{$p$-identity} ($p$-ID).

% \paragraph{Moving Model.} The \emph{$(p,\phi)$-moving} model is a variant
% of the $(p,\phi)$-resampling one, where each time a new vote is
% generated, this new vote replaces the central one.

\paragraph{Disjoint Model.} The \emph{$(p,\phi, g)$-disjoint} model,
where $p$ and~$\phi$ are numbers in $[0,1]$ and $g$ is a non-negative
integer, works as follows: We draw a random partition of $C$ into $g$
sets, $C_1, \ldots, C_g$, and, to generate a vote, we choose
$i \in [g]$ uniformly at random and sample the vote from a
$(p,\phi)$-resampling model with the central vote that approves exactly
the candidates from~$C_i$ (while the central votes are
independent of $p$, we still need this parameter for resampling).

\paragraph{Noise Models.} Let $p$ and $\phi$ be two numbers from
$[0,1]$ and let $d$ be a distance between approval votes. We require
that~$d$ is polynomial-time computable and, for each two approval votes
$u$ and $v$, $d(u,v)$ depends only on  $|A(u)|$,
$|A(v)|$, and $|A(u) \cap A(v)|$;  both distances from
Section~\ref{sec:prelims} have this property. In the
$(p,\phi,d)$-Noise model we first generate a central vote $u$ 
as in the resampling model and, then, each new vote~$v$ is generated with probability
proportional to $\phi^{d(u,v)}$. Such noise models are analogous to
the Mallows model for ordinal elections and were
studied, e.g., by \citet{caragiannis2022evaluating}. In particular,
they gave a sampling procedure for the case of the Hamming
distance. We extend it to arbitrary distances.

\begin{proposition}
  There is a polynomial-time sampling procedure for the
  $(p,\phi,d)$-noise models (as defined above).
\end{proposition}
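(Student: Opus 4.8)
The plan is to exploit the strong structural assumption on $d$: since $d(u,v)$ depends only on $|A(u)|$, $|A(v)|$, and $|A(u)\cap A(v)|$, and the central vote $u$ is fixed once it is drawn (uniformly, exactly as in the resampling model, in polynomial time), the weight $\phi^{d(u,v)}$ assigned to a prospective ballot $v$ is constant across large equivalence classes of ballots. Writing $r = |A(u)|$, I would parametrize every ballot $v$ by the pair $(i,j)$, where $i = |A(v)\cap A(u)|$ counts its approvals inside $A(u)$ and $j = |A(v)\setminus A(u)|$ counts its approvals outside. Then $|A(v)| = i+j$ and $|A(v)\cap A(u)| = i$, so $d(u,v)$ equals a value $D(i,j)$ that is the same for every ballot in the class $(i,j)$. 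Crucially, there are only $(r+1)(m-r+1) = O(m^2)$ such classes.

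First I would compute, for each class $(i,j)$, its total weight
\[
  W(i,j) = \binom{r}{i}\binom{m-r}{j}\,\phi^{D(i,j)},
\]
since the class contains exactly $\binom{r}{i}\binom{m-r}{j}$ ballots, each of weight $\phi^{D(i,j)}$, and $D(i,j)$ is obtained by a single call to the polynomial-time distance $d$ on any representative ballot of the class. Summing gives the normalizer $Z = \sum_{i,j} W(i,j)$, a sum of $O(m^2)$ terms. Sampling then proceeds in two stages: (i) draw a class $(i,j)$ with probability $W(i,j)/Z$ (a standard draw from an explicit discrete distribution, via cumulative sums and one uniform random value); and (ii) output $v$ by choosing an $i$-element subset of $A(u)$ and a $j$-element subset of $C\setminus A(u)$ uniformly at random and letting $A(v)$ be their union. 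A fixed ballot $v$ in class $(i,j)$ is then produced with probability $\tfrac{W(i,j)}{Z}\cdot\tfrac{1}{\binom{r}{i}\binom{m-r}{j}} = \tfrac{\phi^{D(i,j)}}{Z} = \tfrac{\phi^{d(u,v)}}{Z}$, which is exactly proportional to $\phi^{d(u,v)}$, as required.

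Finally I would verify the polynomial running time: the procedure makes $O(m^2)$ evaluations of $d$ and computes $O(m^2)$ binomial coefficients and powers of $\phi$, while the subset draws in stage (ii) are clearly efficient. The one point that needs care — and the step I expect to be the main obstacle — is numerical: the weights mix binomial coefficients as large as $2^m$ with the factors $\phi^{D(i,j)}$, which for a fractional-valued distance such as the Jaccard one need not even be rational. I would resolve this by evaluating each $W(i,j)$ to a polynomial number of bits of precision; since there are only $O(m^2)$ classes, the cumulative-sum draw over them incurs total-variation error that is exponentially small in the chosen precision, so polynomially many bits make the output indistinguishable from the exact distribution. (When $\phi$ is rational and $d$ is integer-valued, as for the Hamming distance, one can instead use exact rational arithmetic, each weight having polynomially many bits.) Everything else is routine bookkeeping.
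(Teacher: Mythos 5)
Your proposal is correct and follows essentially the same route as the paper's proof: the same partition of ballots into $O(m^2)$ equivalence classes indexed by the number of approvals inside and outside $A(u)$, the same class weights $\binom{r}{i}\binom{m-r}{j}\phi^{D(i,j)}$, and the same two-stage sampling (draw a class, then uniform subsets). Your added discussion of numerical precision (and of $[z]_0$-style boundary cases, which the paper's final sampling sentence slightly misstates as $x \in [z]$, $y \in [m-z]$) is a welcome extra layer of care that the paper's proof leaves implicit.
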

\begin{proof}
  Let $u$ be the central vote and let $z = |A(u)|$. Consider
  non-negative integers $x$ and $y$ such that $x \leq z$ and
  $y \leq m - z$. The probability of generating a vote $v$ that
  contains $x$ candidates from $A(u)$ and $y$ candidates from
  $C \setminus A(u)$ is proportional to the following value (abusing notation, we write $d(x,y,z)$ to mean the value $d(u,v)$;
  indeed, $d(u,v)$ depends only on $x$, $y$, and $z$):
  \[
    f(x,y) = \textstyle \binom{z}{x} \binom{m-z}{y} \phi^{d(x,y,z) }.
  \]
  Next, let $Z = \sum_{x \in [z]_0, y \in [m-z]_0} f(x,y)$.
  %; note that it can be computed in polynomial time.
%
  To sample a vote, we draw values $x \in [z]$ and $y \in [m-z]$ with
  probability $\frac{f(x,y)}{Z}$ and form the vote as approving $x$
  random members of $A(u)$ and $y$ random members of
  $C \setminus A(u)$.
\end{proof}
\noindent
In the reminder, we only use the noise model with the Hamming distance
and we refer to it as the $(p,\phi)$-noise model.  Note that the roles of $p$
and $\phi$ in this model are similar but not the same as in the
$(p,\phi)$-resampling model (for example, for $\phi = 0$ we get the
$p$-ID model, but for $\phi=1$ get the $0.5$-IC one).

\paragraph{Euclidean Models.}
In the $t$-dimensional Euclidean model, % each agent $x$
%(i.e,
each candidate and each voter is a point %$p(x)$
from $\reals^t$ and a voter~$v$ approves candidate~$c$ if the distance
between their points is at most $r$ (this value is called the radius);
such models were discussed, e.g., in the classical works
of Enelow and Hinich~[\citeyear{enelow1984spatial},\citeyear{enelow1990advances}], and more recently
by~\citet{elk-lac:c:ci-vi--approval},
\citet{elk-fal-las-sko-sli-tal:c:2d-multiwinner},
\citet{bre-fal-kac-nie2019:experimental_ejr}, and
\citet{god-bat-sko-fal:c:2d}.
We consider $t$-dimensional models for $t \in \{1,2\}$, where the
agents' points are distributed uniformly at random on $[0,1]^t$. We
refer to them as 1D-Uniform and 2D-Square models (note that to fully
specify each of them, we also need to indicate the radius value).

\iffalse
\paragraph{Voter/Candidate Range Models.}
In the voter/candidate range models (the VCR models) each agent $x$
(i.e, each candidate and each voter) is a point $p(x)$ in some
Euclidean space $\reals^t$. Further, for each agent also has radius
$r(x)$. A voter $v$ approves a candidate $c$ if the Euclidean distance
between their points, $p(v)$ and $p(c)$, is at most $r(v) + r(c)$.
VCR models are extensions of the classic Euclidean
models~\cite{enelow1984spatial,enelow1990advances} and the CI/VI
models~\cite{elk-lac:c:ci-vi--approval}, introduced recently by
\citet{god-bat-sko-fal:c:2d}.

We consider $t$-dimensional Euclidean spaces where $t$ is either $1$,
$2$, or $3$, and where the coordinates of the agents' points are
either distributed uniformly on $[0,1]$ (we refer to these models as
1D-Uniform, 2D-Square, and 3D-Cube) or come from Gaussian distribution
with mean $0.5$ and standard deviation $0.15$ (we refer to these as
1D-, 2D-, and 3D-Gaussian models).  We draw agents radii from the Beta
distribution (for the uniform models we use $\alpha=1$ and $\beta=6$,
and for the Gaussian ones we use $\alpha=2$ and $\beta=12$; both sets
of parameters give expected value $0.142$, but for the former the
probability density function is decreasing, whereas for the latter it
is more bell-shaped).
%
Note that for the higher-dimensional spaces the distances between the
agents' points get larger; we did not normalize them as we also wanted
to observe this effect in our experiments.
\fi

\paragraph{Truncated Urn Models.}
Let $p$ be a number in $[0,1]$ and let~$\alpha$ be a non-negative real
number (the parameter of contagion).  In the truncated
Pólya-Eggenberger Urn Model~\cite{berg1985paradox} we start with an
urn that contains all $m!$ possible linear orders over the candidate
set. To generate a vote, we (1) draw a random order $r$ from the urn,
(2)~produce an approval vote that consists of $\lceil p\cdot m \rceil$
top candidates according to $r$ (this is the generated vote), and
(3)~return $\alpha m!$ copies of $r$ to the urn.
%
%Note that
For $\alpha = 0$, all votes with $\lceil p\cdot m \rceil$ approved
candidates are equally likely, whereas for large values of
$\alpha$ %it is likely that
all votes are likely to be identical (so the model becomes 
similar to $p$-ID).

\section{Metrics}
Next we describe two (pseudo)metrics used to measure
distances between approval elections. Since we are interested in
distances between randomly generated elections, our metrics are
independent of renaming the candidates and %reordering the
voters.

%We first introduce some additional notation.  
Consider two equally-sized candidate sets $C$ and $D$, and a voter $v$ with a ballot over $C$.
%Let $v$ be some voter
%with an approval ballot 
%over candidate set $C$, and let $D$ be some
%candidate set such that $|C| = |D|$. 
For a bijection
$\sigma \colon C \rightarrow D$, by $\sigma(v)$ we mean a voter with
an approval ballot $A(\sigma(v)) = \{ \sigma(c) \mid c \in C\}$. In
other words, $\sigma(v)$ is the same as $v$, but with the candidates
renamed by $\sigma$. We write $\Pi(C,D)$ to denote the set of all
bijections from $C$ to $D$. For a positive integer $n$, by $S_n$ we
mean the set of all permutations over $[n]$.  Next, we define the
isomorphic Hamming distance (inspired by the metrics of
\citet{fal-sko-sli-szu-tal:c:isomorphism}).

% To describe the two (pseudo)metrics we use throughout the paper we first introduce some additional notation.
% Let $v$ be a voter with an approval ballot over a candidate set $C$ and let $D$ be another candidate set of the same size. 
% For a bijection $\sigma \colon C \rightarrow D$, by $\sigma(v)$ we mean a voter with an approval ballot $A(\sigma(v)) = \{ \sigma(c) \mid c \in C\}$. That is, $\sigma(v)$ is similar to $v$, but with the candidates renamed by $\sigma$. We write $\Pi(C,D)$ to denote the set of all bijections from $C$ to $D$. For a positive integer $n$, we use $S_n$ to denote the set of all permutations over $[n]$.
% %
% We are ready to define the \emph{isomorphic Hamming distance} (inspired by the metrics of~\citet{fal-sko-sli-szu-tal:c:isomorphism}).

\begin{definition}
  Let $E = (C,V)$ and $F = (D,U)$ be two elections, where $|C| = |D|$,
  %$C = \{c_1, \ldots, c_m\}$, $D = \{d_1, \ldots, d_m\}$,
  $V = (v_1, \ldots, v_n)$ and $U = (u_1, \ldots, u_n)$.  The
  \emph{isomorphic Hamming distance} between $E$ and $F$, denoted
  $d_{\hamming}(E,F)$, is defined as:
  \begin{align*}
    % d_{\hamming}&(E,E') =\\ &
   \textstyle\min_{\sigma \in \Pi(C,D)}\min_{\rho \in S_n} \left(\sum_{i=1}^n \hamming(\sigma(v_i), u_{\rho(i)}) \right).
  \end{align*}
\end{definition}
\noindent Intuitively, under the isomorphic Hamming distance we unify
the names of the candidates in both elections and match their voters
to minimize the sum of the resulting Hamming distances. We call this
distance \emph{isomorphic} because its value is zero exactly if the
two elections are identical, up to renaming the candidates and voters.
Computing this distance is $\np$-hard (see also the
related results for approximate graph
isomorphism~\cite{arv-koe-kuh-vas:c:approximate-graph-isomorphism,gro-rat-woe:c:approximate-graph-isomorphism}).

\begin{proposition}\label{hamming-hard}
  Computing the isomorphic Hamming distance between two approval elections is NP-hard.
\end{proposition}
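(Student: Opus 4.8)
The plan is to reduce from \textsc{Clique}: given a graph $G=(W,\mathcal{E})$ on $N=|W|$ vertices with $M=|\mathcal{E}|$ edges, together with an integer $k$, decide whether $G$ has a clique of size $k$. The guiding observation is structural: once the candidate bijection $\sigma$ is \emph{fixed}, the remaining optimization over $\rho$ is just a minimum-cost perfect matching between the two voter collections, which is polynomial-time solvable. All the hardness therefore has to be pushed into the choice of $\sigma$, i.e., into how we rename candidates. I would consequently encode graphs into elections so that an optimal $\sigma$ is forced to align a dense part of $G$ with a fixed clique pattern, and the minimized distance reads off the number of edges that can be made to coincide.

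Concretely, let both elections have a candidate set of size $N$. In $E=(C,V)$ I create one voter per edge of $G$: the voter for $e=\{a,b\}$ approves exactly the two candidates $a,b$, so $E$ has $M$ voters, each approving a $2$-element set. In $F=(D,U)$ I single out $k$ special candidates and build the clique pattern on them: I create $\binom{k}{2}$ voters, one per pair of special candidates (each approving that pair), and I pad $F$ with $M-\binom{k}{2}$ empty voters (approving nothing) so that $F$ too has exactly $M$ voters. (If $M<\binom{k}{2}$, then $G$ trivially has no $k$-clique, so I may assume $M\ge\binom{k}{2}$.)

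For the correctness argument I would use that $\ham$ between two $2$-sets is $0$, $2$, or $4$ according to whether they coincide, share one candidate, or are disjoint, and that an edge-voter against an empty voter always costs $2$. For a fixed $\sigma$, the number of \emph{coincidences} (edge-voters of $E$ that land exactly on special pairs of $F$) equals the number of edges of $G$ induced by the $k$-set $\sigma^{-1}(\text{special candidates})$; this is at most $s^{*}:=$ the maximum number of edges induced by any $k$-vertex subset of $G$, and $s^{*}=\binom{k}{2}$ exactly when $G$ has a $k$-clique. The generic lower bound is $d_{\hamming}(E,F)\ge 2M-2s^{*}$: the $M-\binom{k}{2}$ empty voters each contribute $2$, and each special pair that is not matched to a coinciding edge-voter contributes at least $2$. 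In a \textsc{yes} instance, mapping a $k$-clique onto the special candidates makes all $\binom{k}{2}$ special pairs coincide (cost $0$) and matches every remaining edge-voter to an empty voter (cost $2$), giving exactly $2\bigl(M-\binom{k}{2}\bigr)$. Thus, setting the threshold $t=2\bigl(M-\binom{k}{2}\bigr)$, a \textsc{yes} instance yields $d_{\hamming}(E,F)\le t$, while a \textsc{no} instance has $s^{*}\le\binom{k}{2}-1$ and hence $d_{\hamming}(E,F)\ge 2M-2s^{*}\ge t+2>t$; deciding $d_{\hamming}(E,F)\le t$ therefore decides \textsc{Clique}, and computing the distance is NP-hard.

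The step needing the most care is the cost accounting for \emph{partially} overlapping edges: two $2$-sets sharing one candidate cost $2$, the same as matching to an empty voter, and disjoint pairs cost $4$, so a priori this slack could spoil a clean formula for $d_{\hamming}$. The reason the reduction nonetheless works is the deliberate decision to read the answer off at the extreme threshold $t=2\bigl(M-\binom{k}{2}\bigr)$: in a \textsc{yes} instance the optimal $\sigma$ makes \emph{all} special pairs coincide exactly, so no partial-overlap or padding ambiguity ever arises, whereas in a \textsc{no} instance the generic bound $d_{\hamming}(E,F)\ge 2M-2s^{*}$ already separates the cases by a gap of $2$. Establishing that lower bound for \emph{every} pair $(\sigma,\rho)$—in particular that the empty voters and the uncovered special pairs each cost at least $2$, independently of how the leftover edge-voters are matched—is the one place where I would write the inequalities out in full.
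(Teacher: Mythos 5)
Your proof is correct, and although it reduces from \textsc{Clique} just as the paper does, the encoding is genuinely different. The paper keeps both voters and candidates in bijection with the vertices: its election $A$ is the adjacency matrix of $G$ with an added diagonal (voter $i$ approves candidate $i$ and the neighbours of $x_i$), its election $B$ is (in intent) a $k\times k$ all-ones block, and the threshold counts the approvals of $A$ that must fall outside the block. You instead let voters correspond to \emph{edges} (each approving exactly its two endpoints), encode the clique pattern as $\binom{k}{2}$ pair-voters over $k$ special candidates, pad with empty voters, and prove the lower bound $d_{\hamming}(E,F)\ge 2M-2s^{*}$ via the $0/2/4$ cost trichotomy for two-element ballots. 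Your route buys a cleaner---and in fact safer---reverse direction: in the paper's construction the voter permutation $\rho$ and the candidate bijection $\sigma$ are chosen independently, so a zero-cost $k\times k$ block could a priori align $k$ voters with $k$ \emph{different} candidates, i.e., witness a balanced biclique (consider $K_{3,3}$ with $k=3$) rather than a clique; the paper's one-sentence backward direction (``$Z=0$ only when corresponding to a $k$-clique'') glosses over exactly this point. In your construction the coincidences are governed by $\sigma$ alone: a pair-voter can be matched at cost $0$ only to an edge-voter both of whose endpoints lie in the single $k$-set $\sigma^{-1}(\text{specials})$, and since voters of $E$ are in bijection with edges, $c\le s^{*}$ follows at once, giving the gap of $2$ in no-instances. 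The price is mild bookkeeping (the empty padding and the assumption $M\ge\binom{k}{2}$), which you handle explicitly; writing out the inequality chain for arbitrary $(\sigma,\rho)$, as you propose in your last paragraph, is routine and completes the argument.
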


Thus we  compute this distance using a brute-force
algorithm (which is faster than using, e.g., ILP formulations).
%
% Hence, to compute isomorphic Hamming distances, we use an integer
% linear program (ILP), based on the one provided by
% \citet{fal-sko-sli-szu-tal:c:isomorphism} for one of their distances
% (see the appendix for exact formulation). However, solving this ILP is
% challenging for larger elections and, so,
%
Since this limits the size of elections we can deal with, we also
introduce a simple, polynomial-time computable metric.

\newcommand{\fracc}[2]{{#1}/{#2}}
\begin{definition}
  Let $E$ % = (C,V)$ 
  be an %approval 
  election 
  with candidate set $\{c_1, \ldots, c_m\}$ and
%  $\{c_1, \ldots, c_m\}$ and 
$n$ voters. 
  %$V = (v_1, \ldots, v_n)$. 
  %We define
  Its approvalwise vector, denoted $\av(E)$, is
  obtained by sorting the vector
  $(\fracc{\score_\av(c_1)}{n}$, $\ldots, \fracc{\score_\av(c_m)}{n})$
  in non-increasing order.
\end{definition}
\begin{definition}
  The approvalwise distance between elections $E$ and $F$ with approvalwise
  vectors $\av(E) = (x_1, \ldots, x_m)$ and $\av(F) = (y_1, \ldots, y_m)$
  is defined as: $$d_\app(E,F) = |x_1-y_1| + \cdots + |x_m-y_m|.$$
\end{definition}

\noindent In other words, the approvalwise vector of an election 
is a sorted vector of the normalized approval scores of its candidates, and an approvalwise distance between two elections is the
$\ell_1$ distance between their approvalwise vectors.
We sort the vectors % Approvalwise vectors are sorted 
to avoid the explicit use of a candidate matching,
as is needed in the Hamming distance.  Occasionally we will speak of
approvalwise distances between approvalwise vectors, without
referring to the elections that provide them.

It is easy to see that the approvalwise distance is computable in
polynomial time.  Indeed, its definition is so simplistic that it is
natural to even question its usefulness.  However, in Section~\ref{sec:correlation} we
will see that in our election datasets it is strongly correlated with the
Hamming distance. Thus, in the following discussion, we focus on
approvalwise distances.

\begin{figure}[t]
\centering
\begin{minipage}[b]{0.4\textwidth}
    \scriptsize
   \begin{center}
     \begin{tikzpicture}
       \node[inner sep=3pt, anchor=south] (up)  at (1,4) {full};
       \node[inner sep=3pt, anchor=east] (left)  at (0,2) {0.5-IC};
       \node[inner sep=3pt, anchor=west] (right)  at (2,2) {0.5-ID};
       \node[inner sep=3pt, anchor=north] (down)  at (1,0) {empty};
       
        \draw[draw=black, line width=0.4mm, -]     (left) edge node  {} (up);
        \draw[draw=black, line width=0.4mm, -]     (left) edge node  {} (right);
        \draw[draw=black, line width=0.4mm, -]     (left) edge node  {} (down);
        \draw[draw=black, line width=0.4mm, -]     (up) edge node  {} (right);
        \draw[draw=black, line width=0.4mm, -]     (up) edge node  {} (down);
        \draw[draw=black, line width=0.4mm, -]     (right) edge node  {} (down);
        
        \normalsize
        \node[] at (0,3.4) {$\nicefrac{m}{2}$};
        \node[] at (2,3.4) {$\nicefrac{m}{2}$};
        \node[] at (0,0.6) {$\nicefrac{m}{2}$};
        \node[] at (2,0.6) {$\nicefrac{m}{2}$}; 
        
        \node[] at (1.5,2+0.25) {$\nicefrac{m}{2}$}; 
        \scriptsize
        \node[] at (1+0.25,3) {$m$}; 
        
        \scriptsize
        % \color{blue}
        \node[] at (0.05-0.4,1.25) {$p$-IC};
        \node[] (pic) at (0.05,1.25) [circle,fill,inner sep=1.5pt]{};
        \node[] at (1.95+0.4,1.25) {$p$-ID}; 
        \node[] (pid) at (1.95,1.25) [circle,fill,inner sep=1.5pt]{};
        \draw[draw=black, line width=0.2mm, -]     (pic) edge node  {} (pid);
        \node[] at (0.66,1.05) {$(p,\phi)$}; 
        \node[] (p) at (0.66,1.25) [circle,fill,inner sep=1.5pt]{};
        
        \node[] at (1.4,1.35) {$a$};
        \node[] at (0.4,1.38) {$b$}; 
        \node[] at (0.67,0.62) {$c$}; 
        \node[] at (0.74,3.0) {$d$};  
        \node[anchor=west] at (-0.1,5.6) {$a=2mp(1-p)\phi$};
        \node[anchor=west] at (-0.1,5.3) {$b=2mp(1-p)(1-\phi)$};
        \node[anchor=west] at (-0.1,5.0) {$c=mp$};  
        \node[anchor=west] at (-0.1,4.7) {$d=m(1-p)$};  
        
        \draw[draw=black, line width=0.2mm, -]     (p) edge node  {} (up);
        \draw[draw=black, line width=0.2mm, -]     (p) edge node  {} (down);
    
      \end{tikzpicture}
    \end{center}
    \caption{Distances between resampling elections.}
    \label{compass}
  \end{minipage}%
  \;\;
\begin{minipage}[b]{0.4\textwidth}
    \centering
    \includegraphics[width=3.9cm]{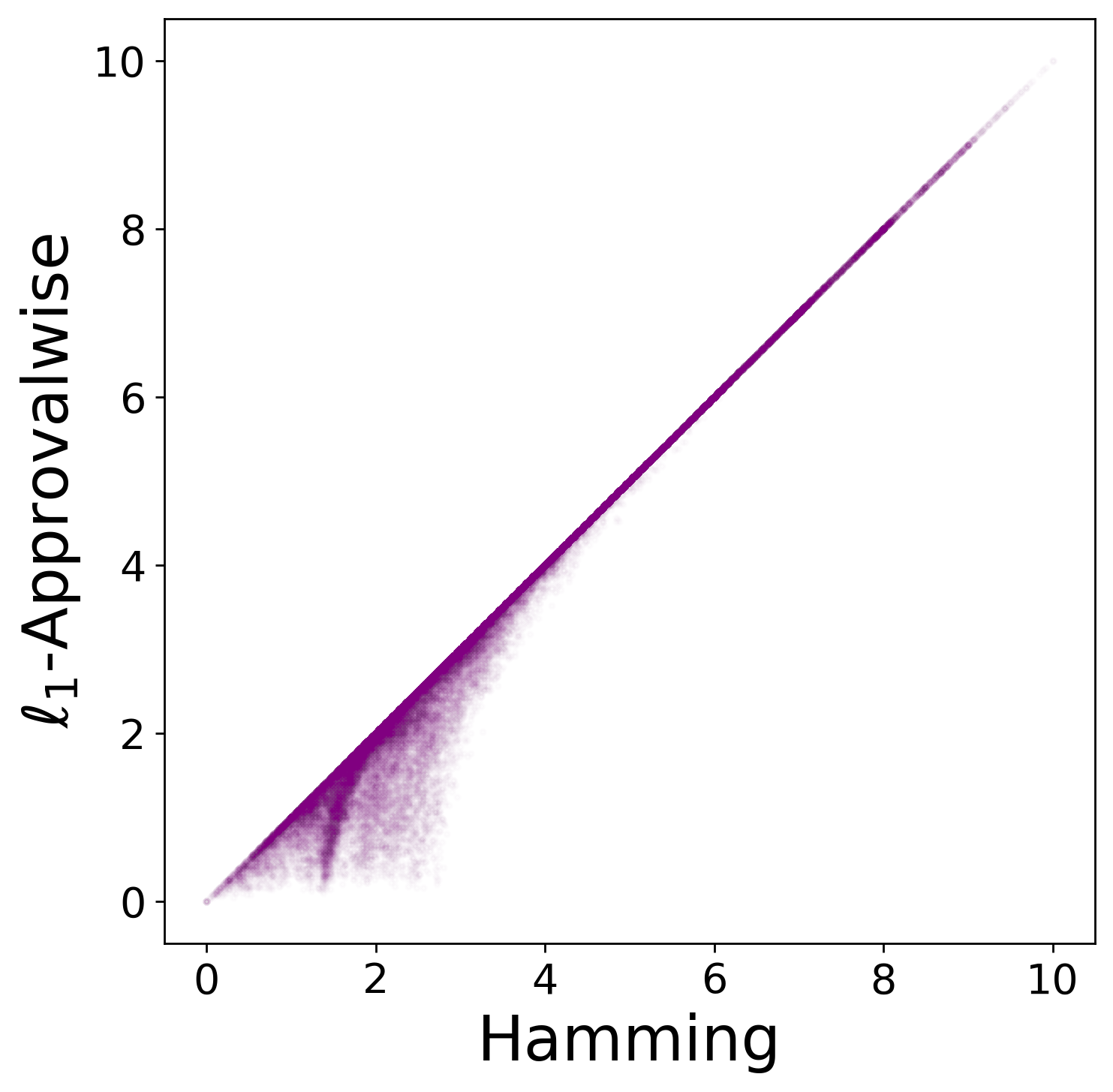}%
    \caption{Correlation between isomorphic Hamming and approvalwise metrics.}
    \label{correlation}
\end{minipage}%
% \begin{minipage}[b]{0.5\textwidth}
%     \includegraphics[width=4.2cm, trim={3 2 3 2}, clip]{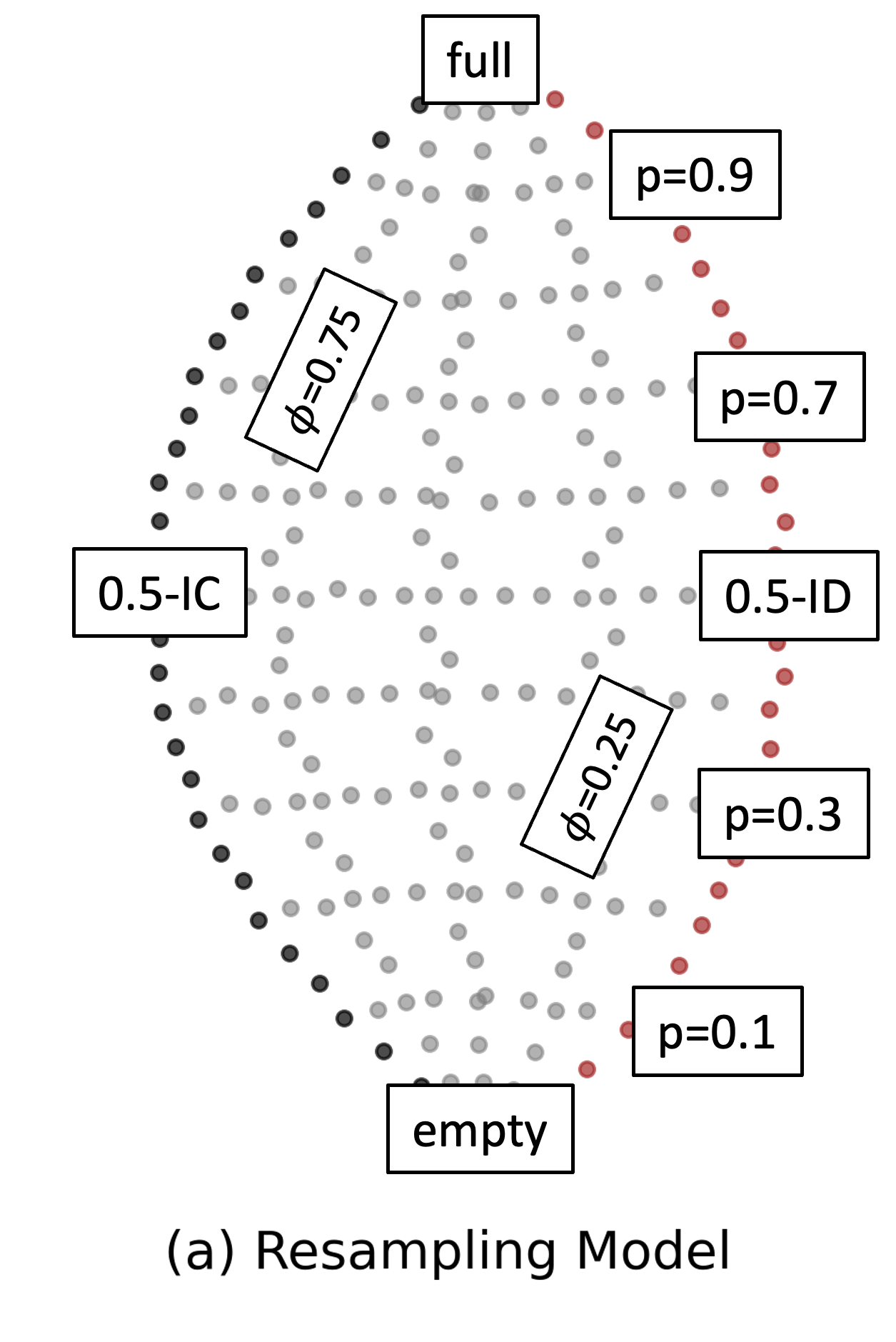}%
%     \includegraphics[width=4.2cm]{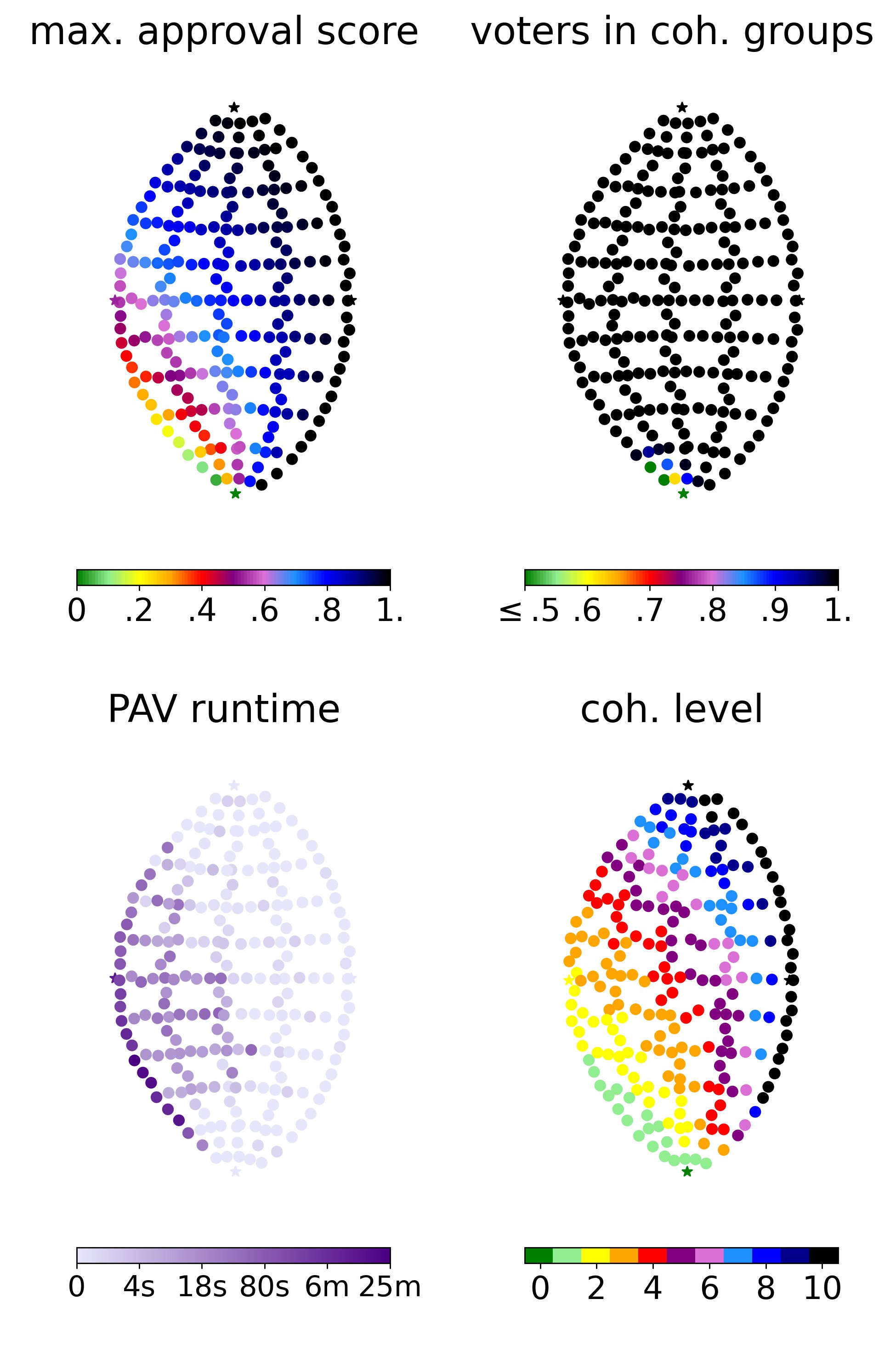}%
%     \caption{Resampling Model.}
%     \label{resampling_main}
% \end{minipage}%
\end{figure}

\newcommand\width{3.3cm}
\newcommand\smallwidth{2.2cm}
\newcommand\finalwidth{5.2cm}

\begin{figure*}[]
    \centering
    % \scalebox{0.95}{
    \includegraphics[width=\finalwidth, trim={3 0 3 0}, clip]{images/final_egg/resampling}%
    \includegraphics[width=\finalwidth]{images/merged/resampling.jpg}
    
    \includegraphics[width=\finalwidth, trim={3 0 3 0}, clip]{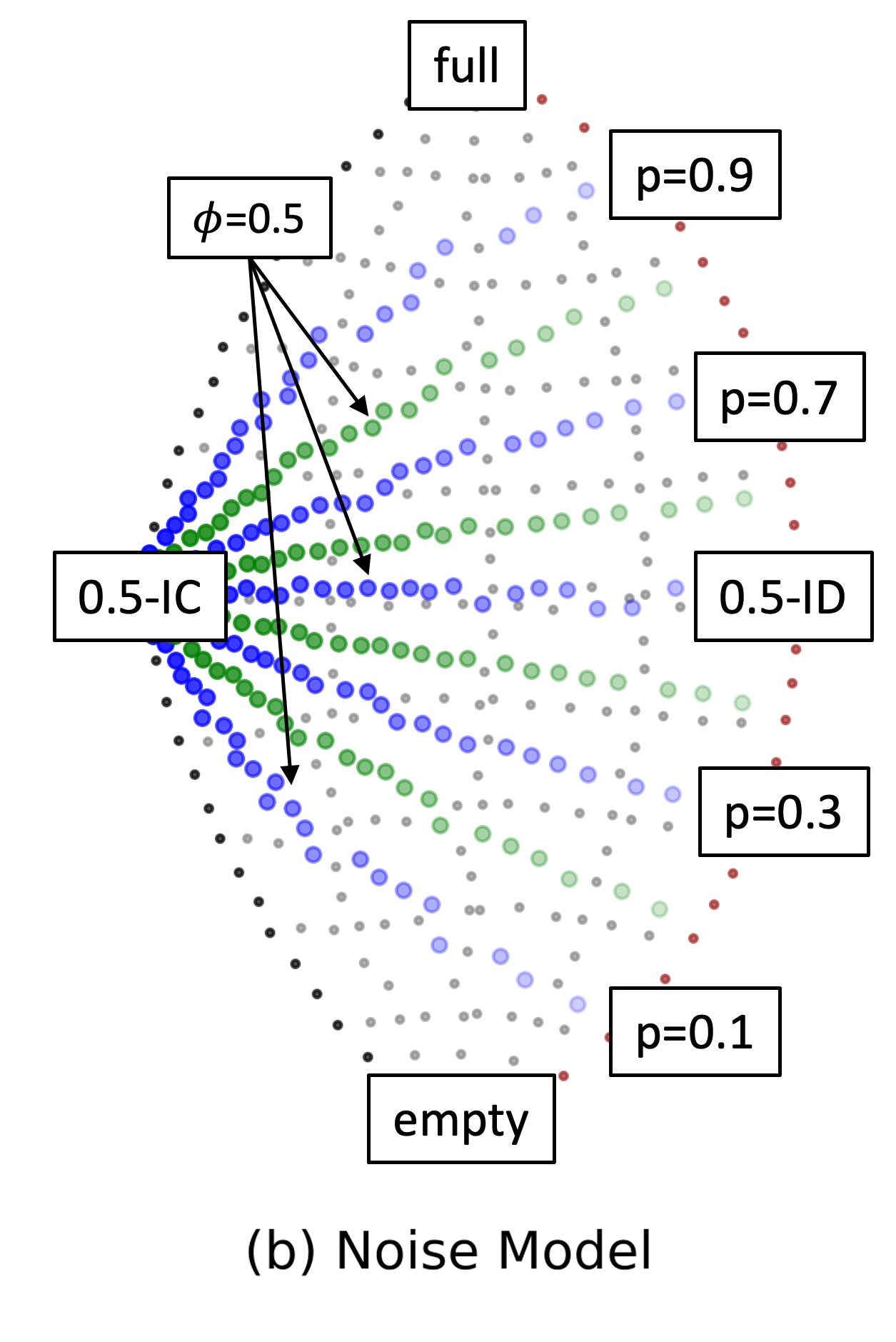}%
    \includegraphics[width=\finalwidth]{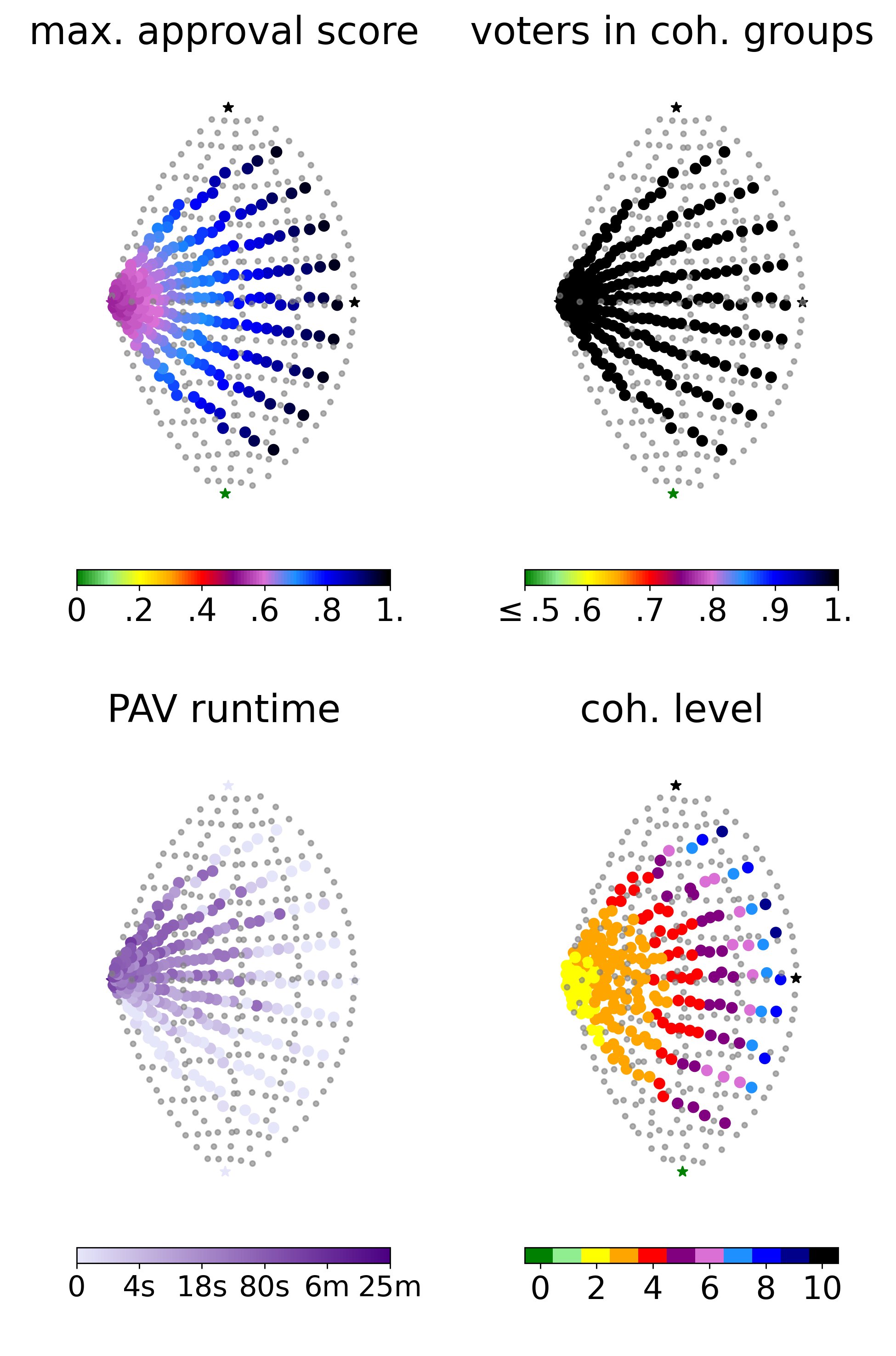}%
    % }
    \caption{Maps for (a)~the Resampling Model and (b) the Noise
      Model. For the latter, the darker a dot in the main plot, the
      higher is the $\phi$ value used to generate the election.}
     \label{main_results_1}
\end{figure*}

\section{A Grid of Approval Elections}\label{sec:grid}

To better understand the approvalwise metric space of elections, next we analyze expected distances between elections generated according
to the $(p,\phi)$-resampling model.

Fix some number $m$ of candidates and parameters
$p, \phi \in [0,1]$, such that $pm$ is an
integer, and consider the process of generating votes from
the $(p,\phi)$-resampling model. 
% As we generate more and more votes,
In the limit, the approvalwise vector of the resulting election is:
\[
  (\underbrace{(1-\phi) + (\phi \cdot p), \ldots, (1-\phi) +
    (\phi\cdot p)}_{p\cdot m}, \underbrace{\phi\cdot p, \ldots,
    \phi\cdot p}_{(1-p)\cdot m}).
\]
Indeed, each of the $p m$ candidates approved in the central
ballot either stays approved (with probability $1-\phi$) or is
resampled (with probability $\phi$, and then gets an approval with
probability $p$). Analogous reasoning applies to the remaining $(1-p)m$
candidates.
With a slight abuse of notation, we call the above vector
$\av(p,\phi)$. Furthermore, we refer to $\av(p,0)$ as the $p$-ID vector,
to $\av(p,1)$ as the $p$-IC vector, and to $0$-ID and \linebreak $1$-ID vectors
as the \emph{empty} and \emph{full} ones, respectively (note that 0-ID $=$ 0-IC and \linebreak 1-ID $=$ 1-IC).

Now, consider two additional numbers, $p', \phi' \in [0,1]$, such that
%$p' \leq p$, $\phi' \leq \phi$, and 
$p' m$ is an integer. Simple
calculations show that:
\begin{align*}
  d_\app(\mathit{empty},\mathit{full}) &= m, \\
  d_\app(p\hbox{-}\mathrm{IC}, p\hbox{-}\mathrm{ID}) &= 2mp(1-p),\\
  d_\app(\av(p,\phi), \av(p',\phi) ) &= m \cdot|p-p'|,\\
  d_\app(\av(p,\phi), \av(p,\phi') ) &= 2mp(1-p)\cdot|\phi-\phi'|.
\end{align*}
Thus $d_\app(\av(p,\phi), \mathit{empty})=mp$ is a $p$~fraction of the
distance between \emph{empty} and \emph{full}, and
$d_\app(\av(p,\phi), p\hbox{-}\mathrm{ID}) = 2mp(1-p)\phi$ is a $\phi$
fraction of the distance between $p$-IC and $p$-ID (see also
Figure~\ref{compass}).  Furthermore,
$d_\app(\mathit{empty},\mathit{full}) = m$ is the largest possible
approvalwise distance.

Intuitively, $(p,\phi)$-resampling elections form a grid that spans
the space between extreme points of our election space; the larger
the~$\phi$ parameter, the more ``chaotic'' an election becomes
(formally, the closer it is to the $p$-IC elections), and the larger
the $p$~parameter, the more approvals it contains (the closer it is to the
\emph{full} election). We use $(p,\phi)$-resampling elections as
a \emph{background} dataset, which consists of $241$ elections with 100
candidates and 1000 voters each, with the following $p$ and $\phi$
parameters:
\begin{enumerate}
\item $p$ is chosen from $\{0, 0.1, 0.2, \dots, 0.9,1\}$ and $\phi$ is
  chosen from the interval $(0,1)$,\footnote{By generating $t$
    elections with a parameter from interval $(a,b)$, we mean
    generating one election for each value $a+i\frac{b-a}{t+1}$, for
    $i \in [t]$.}  
    % or
\item $\phi$ is chosen from $\{0, 0.25, 0.5, 0.75, 1\}$ and $p$ is
  chosen from the interval $(0,1)$.
\end{enumerate}
For each of these elections we compute a point in $\reals^2$, 
% We compute the approvalwise distances between all these elections and
% associate each election with a point in $\reals^2$,
so that the Euclidean distances between these points are as similar to
the approvalwise distances between the respective elections as
possible.  For this purpose, similarly to \citet{szu-fal-sko-sli-tal:c:map}, we use the Fruchterman-Reingold
force-directed algorithm~\cite{fruchterman1991graph}.  For
the resulting map, see the clear grid-like shape at the left side of Figure~\ref{main_results_1}.\footnote{While our visualizations fit nicely into the two-dimensional
embedding, our election space has a much higher dimension.}
Whenever we present maps of elections later in the paper, we compute
them in the same way as described above (but for datasets that include
other elections in addition to the background ones).

\section{Experiments}

In this section, we use the map-of-elections approach
%
%% showcase our map approach described in Section~\ref{sec:grid}
%% and
% a general method for visualizing the landscape of approval elections.  
% Given a set of elections, we create
% a map by first computing pairwise $\ell_1$-approvalwise distances, and
% then embedding all the points (elections) in the 2-dimensional
% Euclidean space using the Fruchterman-Reingold force-directed
% algorithm \cite{???}.
% use it
to analyze quantitative properties of approval elections generated
according to our models.
%from Section~\ref{sec:statistical-cultures}.
% With our approach, one can visualize how quantitative properties
% relate to different statistical distributions.
%Two main insights are to be gained: First, these visualizations reveal
%
In particular, we will see how an election's position in the grid
influences each of the properties, and what parameters to use to
%
%Second, one can determine how to randomly
%
generate elections with the quantitative property in a desired range.

\subsection{Experimental Design}

Concretely, we consider the following four statistics:

\begin{description}
%  \noindent\textbf
  \item[Max. Approval Score.] The highest approval score
  among all candidates in a given election, normalized by the maximum
  possible score, i.e., the number of voters.

  % \noindent\textbf
  \item[Cohesiveness Level.] The largest integer $\ell$ such that
    there exists an $\ell$-cohesive group (for committee size $10$).
    % This number is also  normalized by the largest possible value, i.e., the committee  size.
%     we
%    assume a committee size of $10$.

  % \noindent\textbf
  \item[Voters in Cohesive Groups.] Fraction of voters that belong
    to at least one 1-cohesive group (committee size $10$).

  % \noindent\textbf
  \item[PAV Runtime.] Runtime (in seconds) required to
  compute a winning committee under the PAV rule, by solving an
  integer linear program provided by the \texttt{abcvoting} library
  \cite{abcvoting}, using the Gurobi ILP solver.
\end{description}

% \begin{figure}
%     \centering
%     \includegraphics[width=6cm]{images/merged/resampling.jpg}
%     \caption{Resampling Model}
%     \label{resampling_features}
% \end{figure}

%%%%%%

% \subsection{Datasets}
% Let us first describe our so-called background dataset, which sketches the shape of our maps and
% provides something akin to survey points (Figure~\ref{resampling_main}, left-hand side).
% The background dataset consists of the following elections with 100 candidates and 1000 voters generated via the $(p,\phi)$-resampling model. The nine horizontal paths correspond to $p \in \{0.1, 0.2, \dots, 0.9\}$ and $\phi \in (0,1)$; the five vertical paths correspond to $\phi \in \{0, 0.25, 0.5, 0.75, 1\}$ and $p \in (0,1)$.

% Our experimental analysis is based on
We use six datasets.  Five of them
are generated using our  statistical cultures
% described in Section~\ref{sec:statistical-cultures}
and consist of 100 candidates
and 1000 voters (except for the experiments related to the
cohesiveness level, where we have 50 candidates and 100 voters, due to
computation time). We have: 250 elections from the Disjoint Model (50 for each
$g \in \{2,3,4,5,6\}$ with $\phi \in (0.05, \nicefrac{1}{g})$);
%225
%elections from Moving Model (25 for each
%$p \in \{0.1, 0.2, \dots, 0.9\}$ with $\phi \in (0,0.01)$);
225 elections from the Noise Model with Hamming distance (25 for each
$p \in \{0.1, 0.2, \dots, 0.9\}$ with $\phi \in (0,1)$); 225 elections
from the Truncated Urn Model (25 for each
$p \in \{0.1, 0.2, \dots, 0.9\}$ with $\alpha \in (0,1)$); 200
elections from Euclidean Model (100 for 1D-Uniform, with radius in
$(0.0025,0.25)$, and 100 for 2D-Square, with radius in
$(0.005, 0.5)$); these parameters are as used by
\citet{bre-fal-kac-nie2019:experimental_ejr}. The sixth dataset uses
real-life participatory budgeting data and contains 44 elections from
Pabulib~\cite{pabulib}, where for each (large enough) election we
randomly selected a subset of 50 candidates and 1000 voters.
%(other
%real-life datasets we considered had much fewer candidates).

\subsection{Experimental Results}

Our visualizations are shown in Figures~\ref{main_results_1} and~\ref{main_results_2}.
%
%When looking at the map visualizations of the background dataset (left-hand side of Figure~\ref{main_results_1}a), we see that this dataset indeed establishes a grid structure,
%which
%
We use the grid structure of the background dataset for comparison
with other datasets.  Notably, some of them do not fill this grid: the
disjoint model (Figure~\ref{main_results_2}b) is restricted to the
lower half (i.e., the disjoint model does not yield elections with
very many approvals), the Euclidean model
(Figure~\ref{main_results_2}a) is restricted to the left half (due to
the uniform distribution of points, its elections are rather ``chaotic''),
and the real-world dataset Pabulib (Figure~\ref{main_results_2}d) is
placed very distinctly in the bottom left part.

To get an intuitive understanding of the four statistics, let us
consider the background dataset in Figure~\ref{main_results_1}a.  We
see that the highest approval score is lowest in the lower left side
and increases towards up and right. This is sensible: If the average
number of approved candidates increases, so does this statistic; also,
if voters become more homogeneous, high-scoring candidates are likely
to exist.  Moreover, regarding voters in cohesive groups, it turns out
that in most elections almost all voters belong to a 1-cohesive group,
with the left lower part as an exception (where there are not enough
approvals to form $1$-cohesive groups). The time needed to find a
winning committee under PAV is correlated with the distance from
0.5-IC. We see that it takes the longest to find winning committees if the election is unstructured.  Similar to the highest approval score,
the cohesiveness level increases when moving up or right in the
diagram. Cohesive groups with levels close to the committee size only
exist in very homogeneous elections (rightmost path) and elections
with many approvals (top part).

We move on to the results for the five other datasets.
% displayed in Figure~\ref{main_results_1}b and Figure~\ref{main_results_2}.
Note that each figure also contains the background dataset (gray dots)
for reference.  These results help to understand the differences
between our statistical cultures.

The maximum approval score statistic provides an insight into whether there is a 
candidate that is universally supported. Instances with a value close to 1 possess such
a candidate. In a single-winner election, this candidate is likely to be 
a clear winner. This is undesirable when simulating contested elections or
shortlisting. Also note that in the real-world
data set (Pabulib) we do not observe the existence of such a candidate.

When looking at the PAV runtime, we find some
% can identify
statistical cultures that generate computationally difficult
elections, e.g., the $(p,\phi)$-resampling model with parameter values
close to $p=0.5$ and $\phi=1$ (0.5-IC), the noise model with
parameters $p\in[0.5,0.9]$ and $\phi>0.5$, and the disjoint model
with $g=2$.  Yet, %It is particularly interesting that
instances
from the real-world dataset, as well as from the Euclidean and urn ones, can be computed very quickly.\footnote{Less than 1 second on a single core (Intel Xeon Platinum 8280 CPU @ 2.70GH)
  of a 224 core machine with 6TB RAM. In contrast, the worst-case instance (0.3-IC) required 25 minutes on 13 cores.}

% the same holds for
% %, and so are
% %the instances from
% the Euclidean models % (1D-Interval and 2D-Square)
% and
% Truncated urn model.\footnote{Less than 1 second on a single core (Intel Xeon Platinum 8280 CPU @ 2.70GH)
% of a 224 core machine with 6TB RAM. In contrast, the worst-case instance (0.3-IC) required 25 min on 13 cores.}

%Let us now consider the two statistics related to cohesive groups.
Concerning voters in cohesive groups, whenever this statistic is close to 1,
it is easy to satisfy most voters with at least one approved candidate
in the committee; such committees are easy to find~\cite{justifiedRepresentation}.
Since many proportional rules take special care of voters that belong
to cohesive groups, in such elections there are no voters that are at
a systematic disadvantage.  In many of our generated elections (almost)
all voters belong to $1$-cohesive groups, but this is not the case for
the real-world, Pabulib data.
Indeed, to simulate Pabulib data well, we would likely need to
provide some new
statistical culture(s).
%
% %have this property.
% %However, note that for our real-world
% %data set (Pabulib), this is absolutely not the case.
% Thus, if we seek to generate instances resembling the Pabulib ones, we
% want instances with a value of this statistic between $0$ and
% $0.6$. %Our experiments show that
% We can achieve this, e.g., with
% $(p,\phi)$-resampling for $p<0.5$ and $\phi<0.2$.
% % as well as the disjoint model with $g=2$.
For the cohesiveness level, we see that all models generate a full
spectrum (i.e., $[0,10]$) of cohesiveness levels.
That said, we expect realistic elections to appear in the ``lower
left'' part of our grid (with few approvals), and such elections tend
to have low cohesiveness levels.  Indeed, this is also the case for
Pabulib elections; hence it is important how proportional rules treat
$\ell$-cohesive groups with small $\ell$.

%As before, Pabulib shows a different picture with only small
%values ($\leq 2$). We conclude that in this dataset there are no large
%cohesive groups.

\subsection{Correlation}\label{sec:correlation}
% As described in Section~\ref{sec:grid},
Figures~\ref{main_results_1} and~\ref{main_results_2} are based on the
approvalwise distance.  We %would like to
argue that they
%these figures
would not change much if we used the (computationally intractable)
isomorphic Hamming distance.  To this end, we generated 363 elections
with 10 candidates and 50 voters from the statistical cultures used in
the previous experiment (detailed dataset description is in the appendix~\ref{apdx:correlation}). We compare Hamming and approvalwise
distances; the results are presented in Figure~\ref{correlation}. Each
dot there represent a pair of elections and its coordinates are the
distances between them, according to the Hamming and approvalwise
metrics. The Pearson Correlation Coefficient is 0.989, and for 67\% of
elections the distances are identical.

\section{Future Work}

An important task for future work is to broadly study real-world datasets with the methods
proposed in this paper. Most of these come from political elections with few candidates.
As our analysis may be influenced by the number of candidates, a direct comparison with the 
figures in this paper is not possible; instead one has to rerun the experiments for 
a similar number of candidates.

\begin{figure*}[t]
    \centering

    % \scalebox{0.95}{
    \includegraphics[width=\finalwidth, trim={3 0 3 0}, clip]{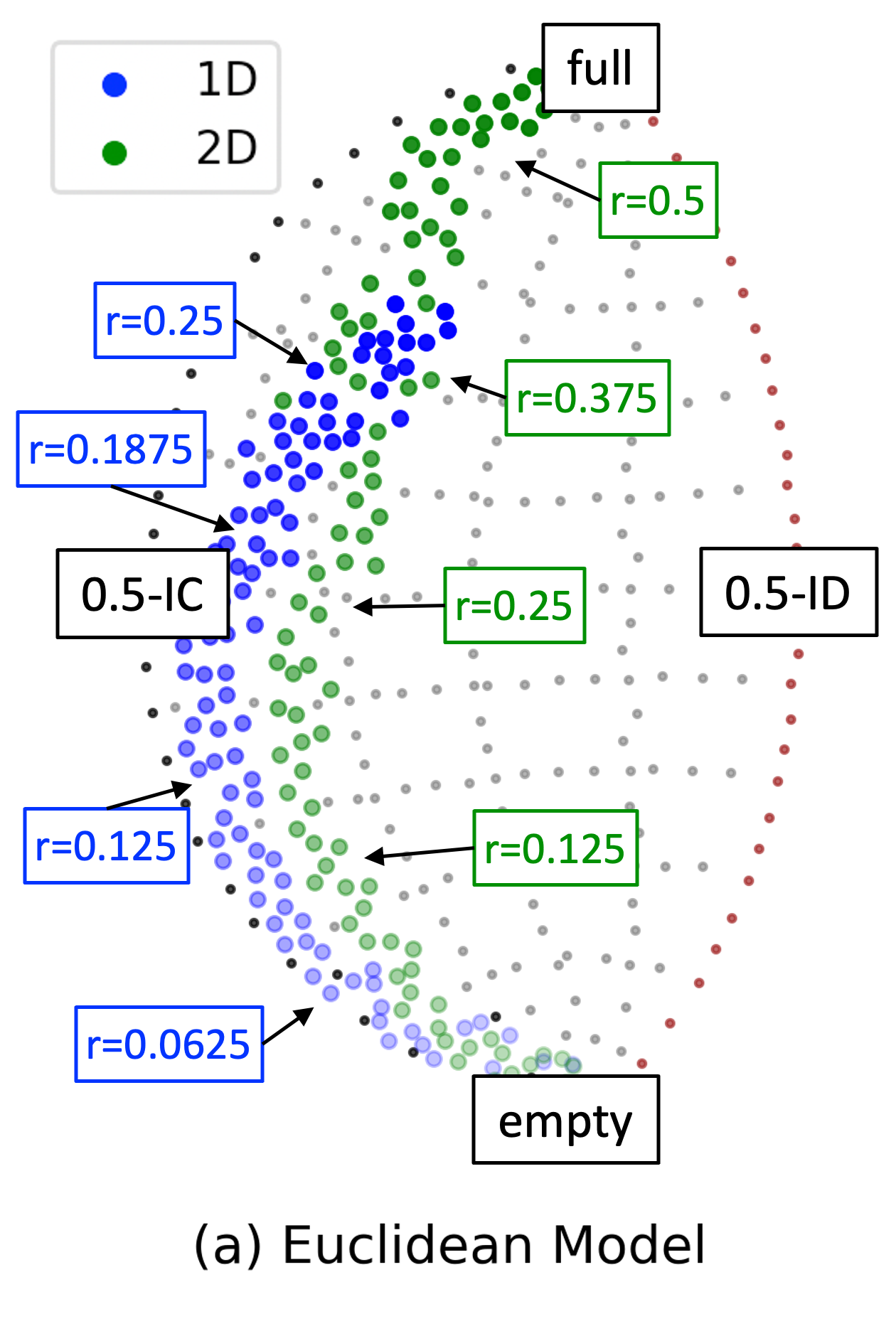}%
    \includegraphics[width=\finalwidth]{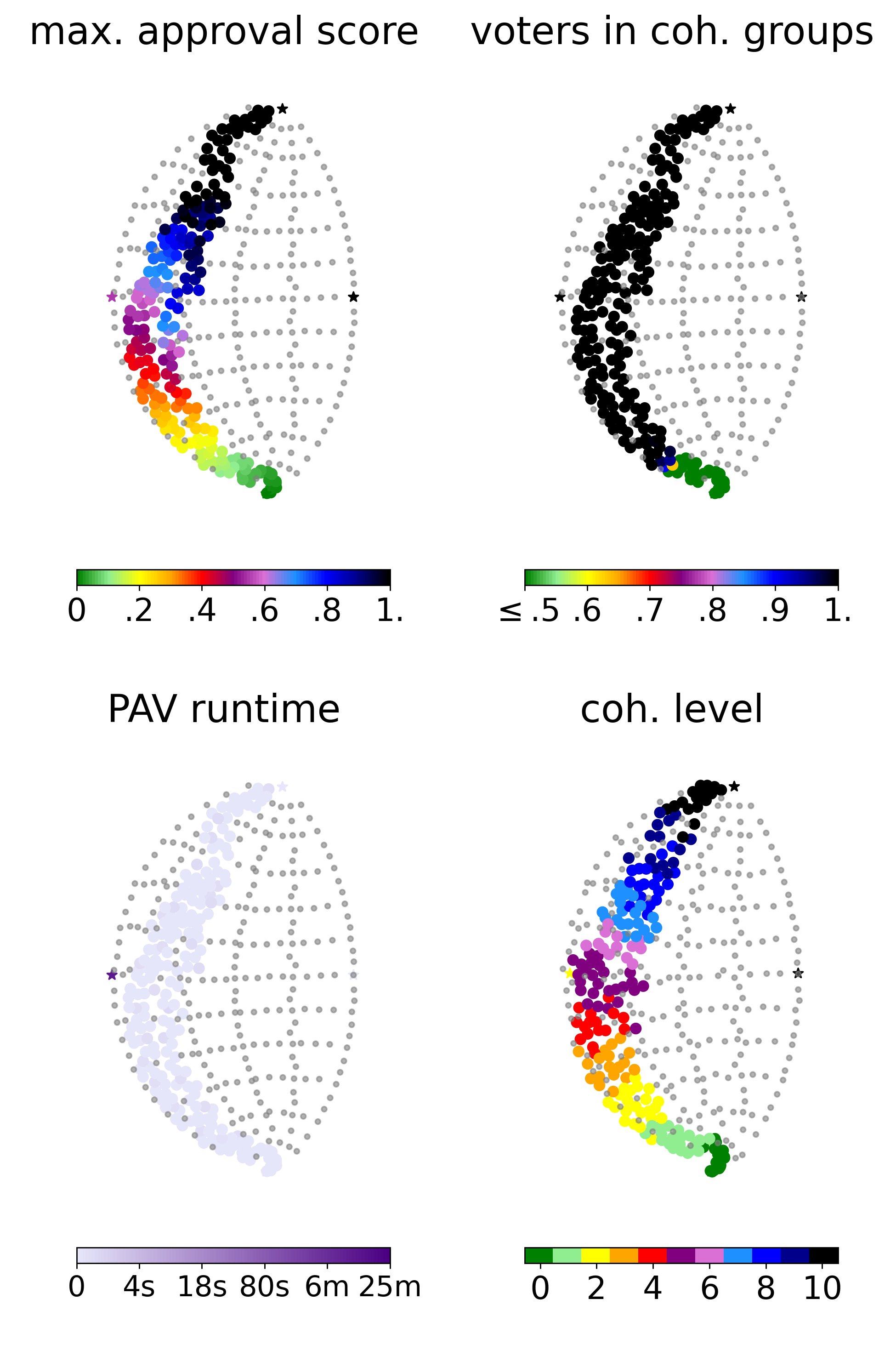}
    \\
    \includegraphics[width=\finalwidth, trim={3 0 3 0}, clip]{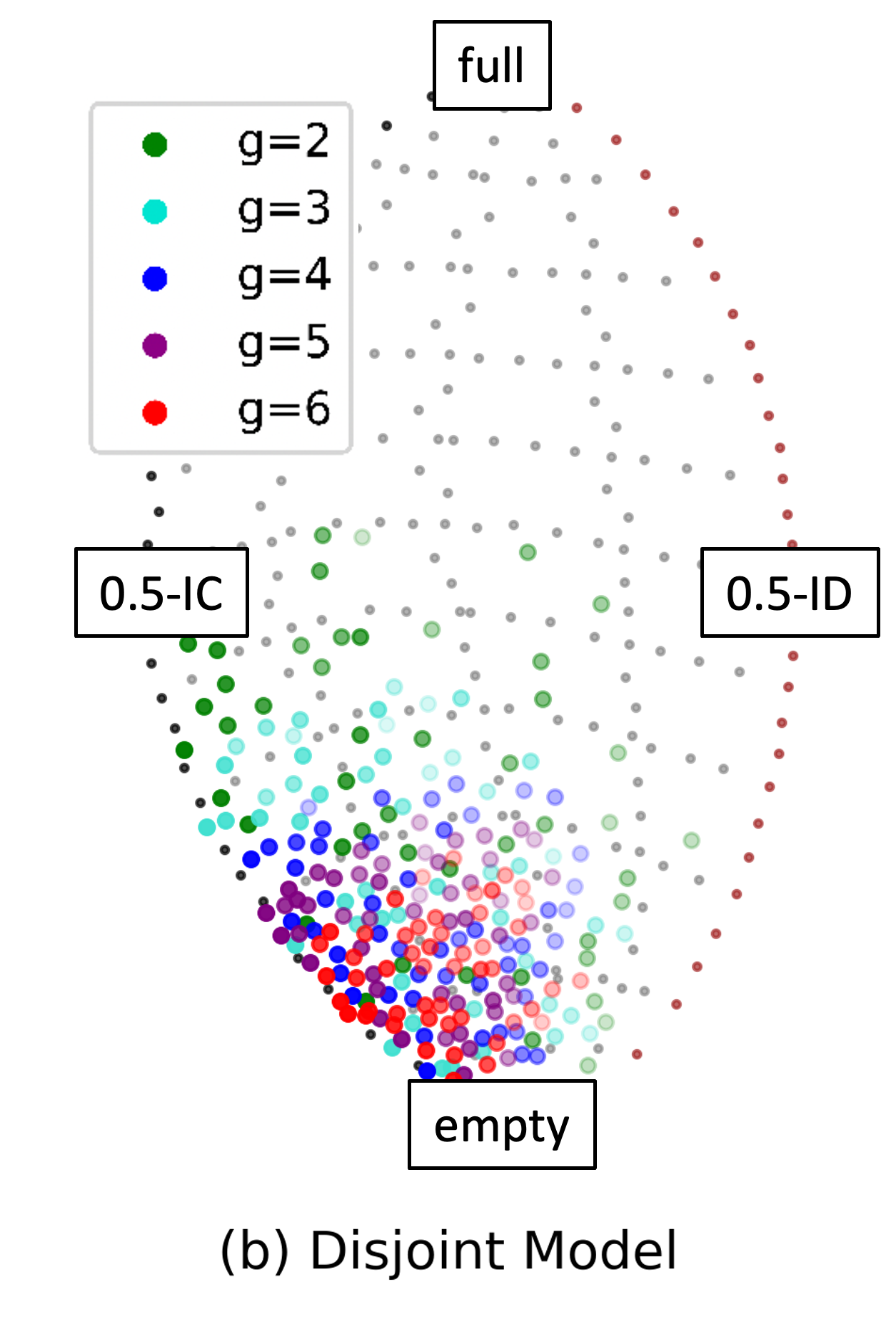}%
    \includegraphics[width=\finalwidth]{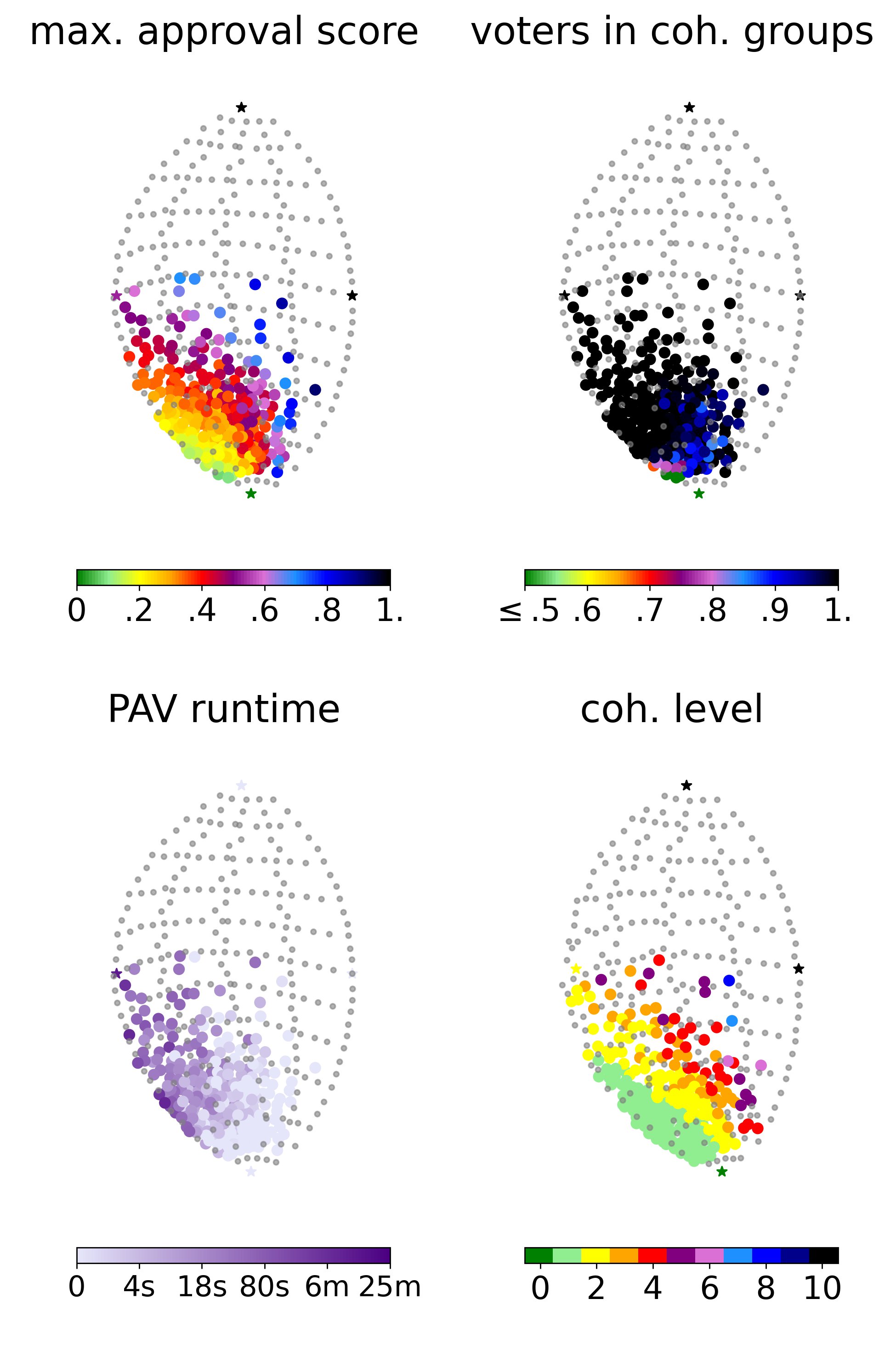}%
%   }
\end{figure*}
\begin{figure*}[t]
    \ContinuedFloat 
    \centering
    % \scalebox{0.95}{
    \includegraphics[width=\finalwidth, trim={3 0 3 0}, clip]{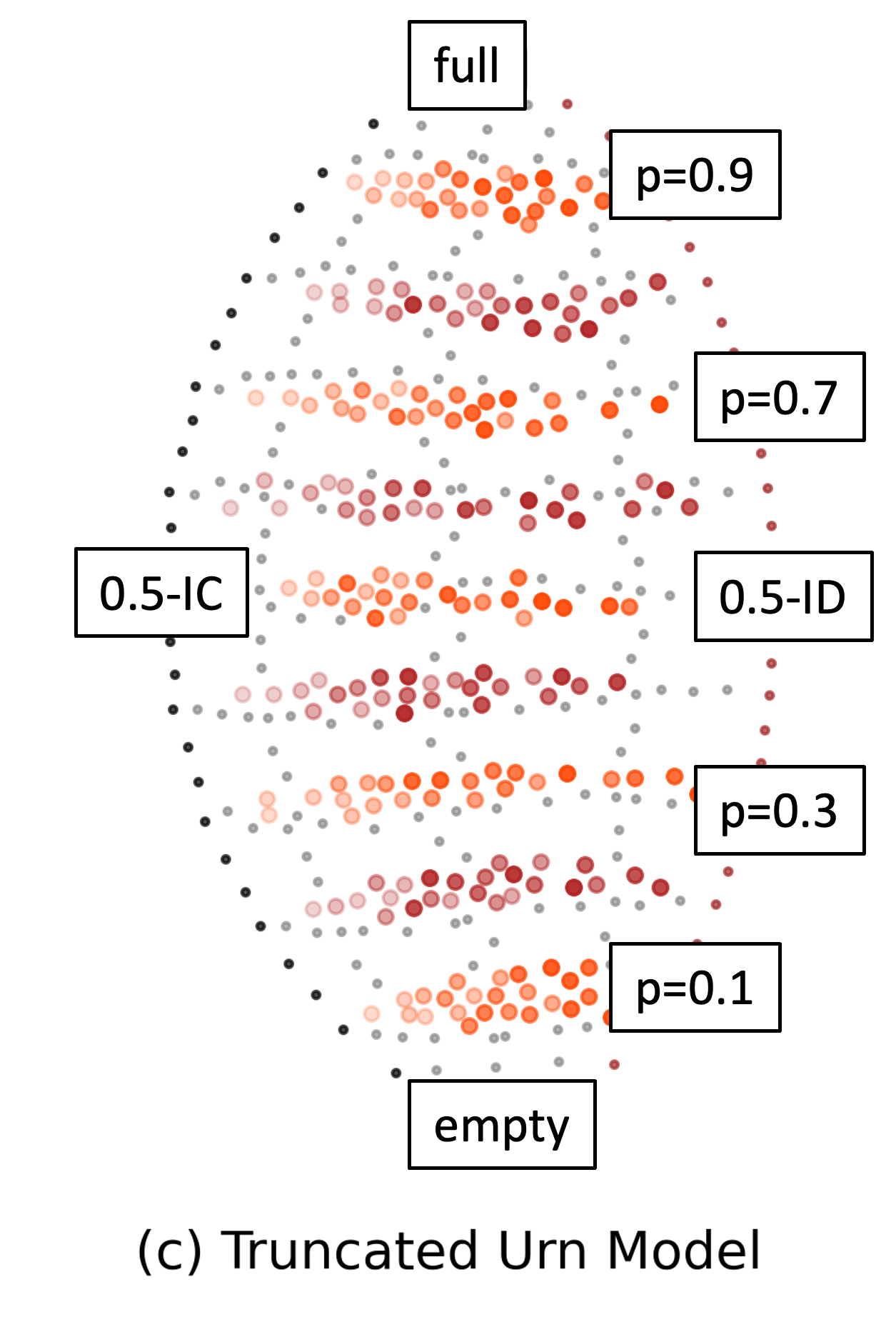}%
    \includegraphics[width=\finalwidth]{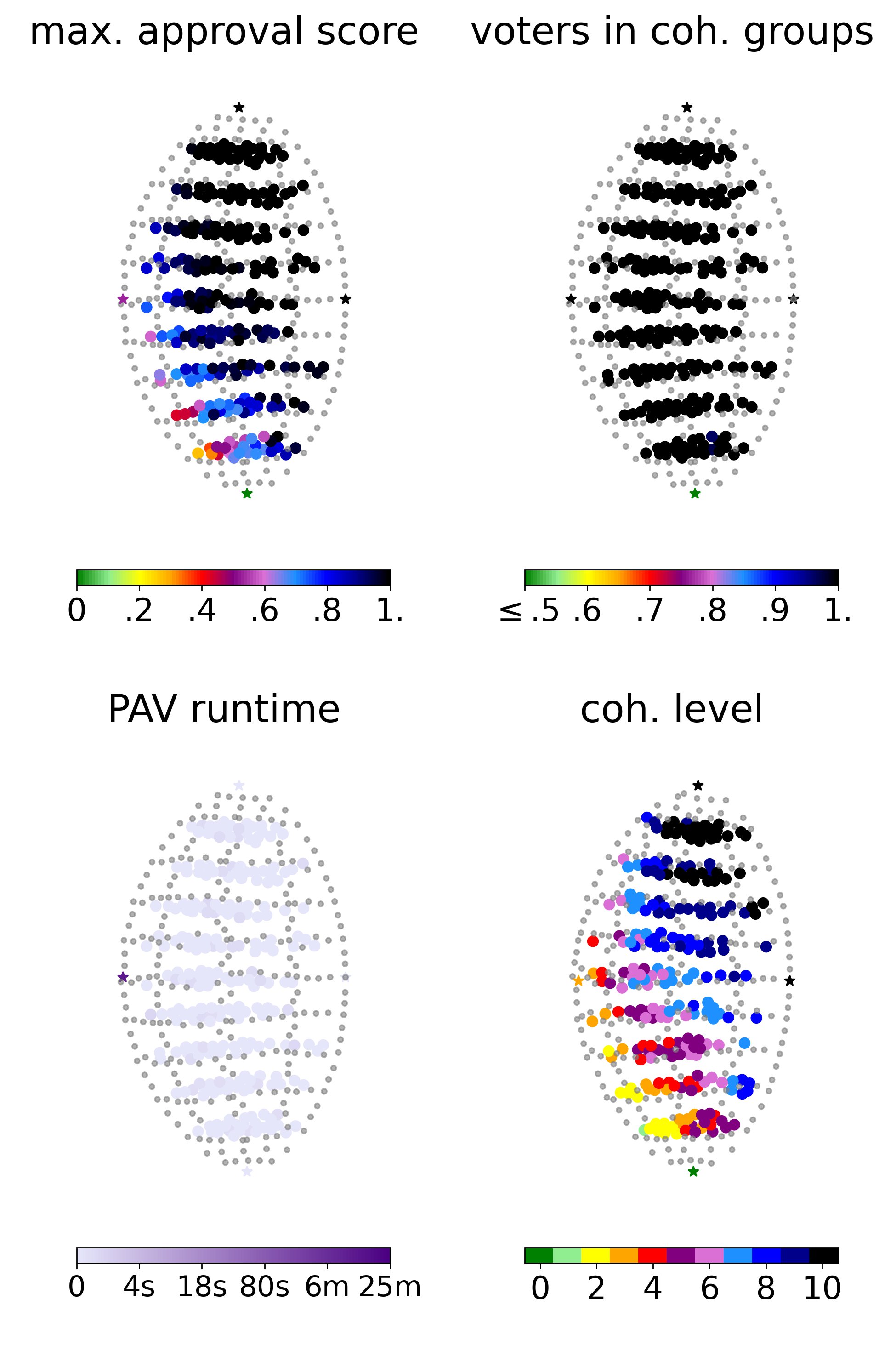}
    \\
    \includegraphics[width=\finalwidth, trim={3 0 3 0}, clip]{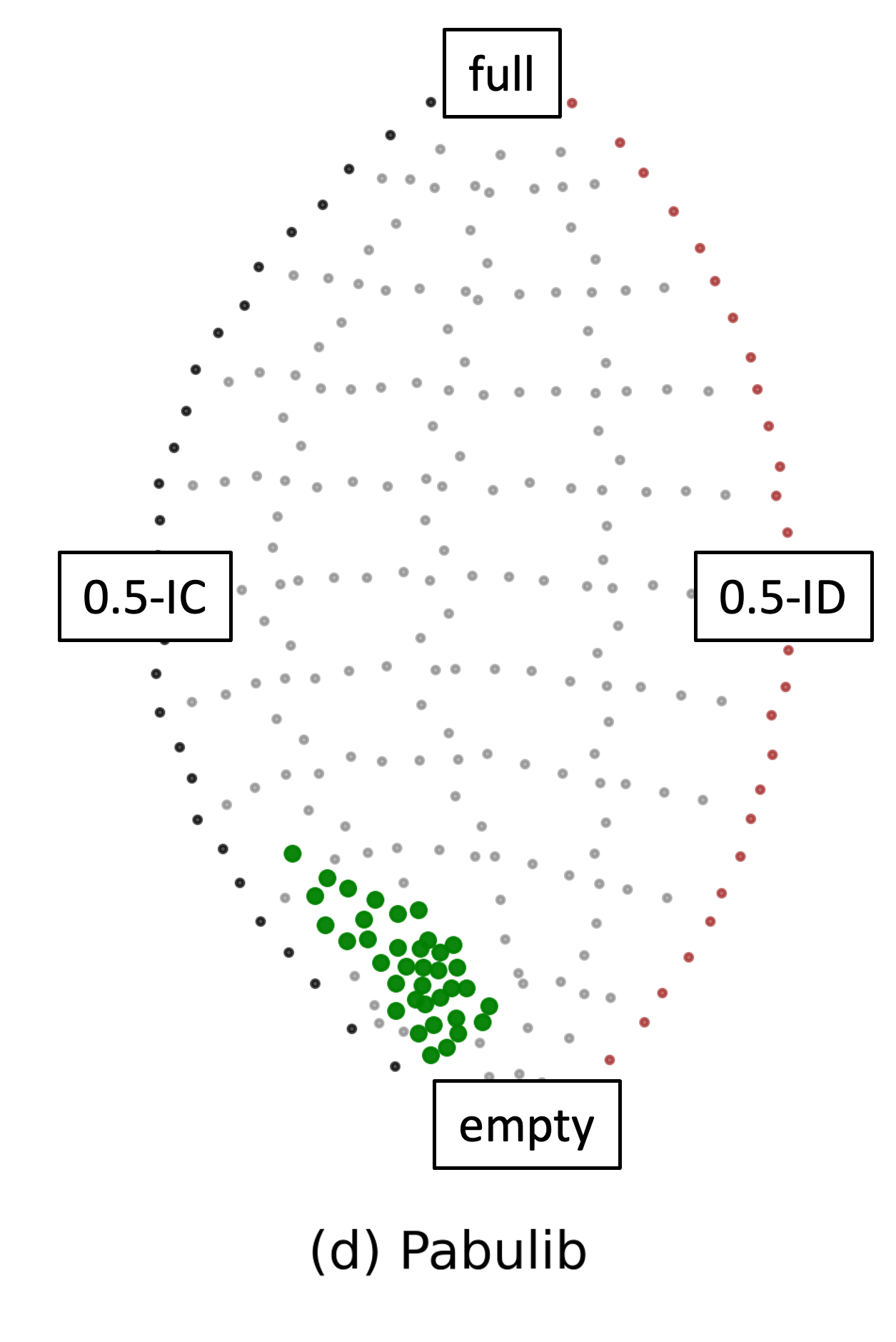}%
    \includegraphics[width=\finalwidth]{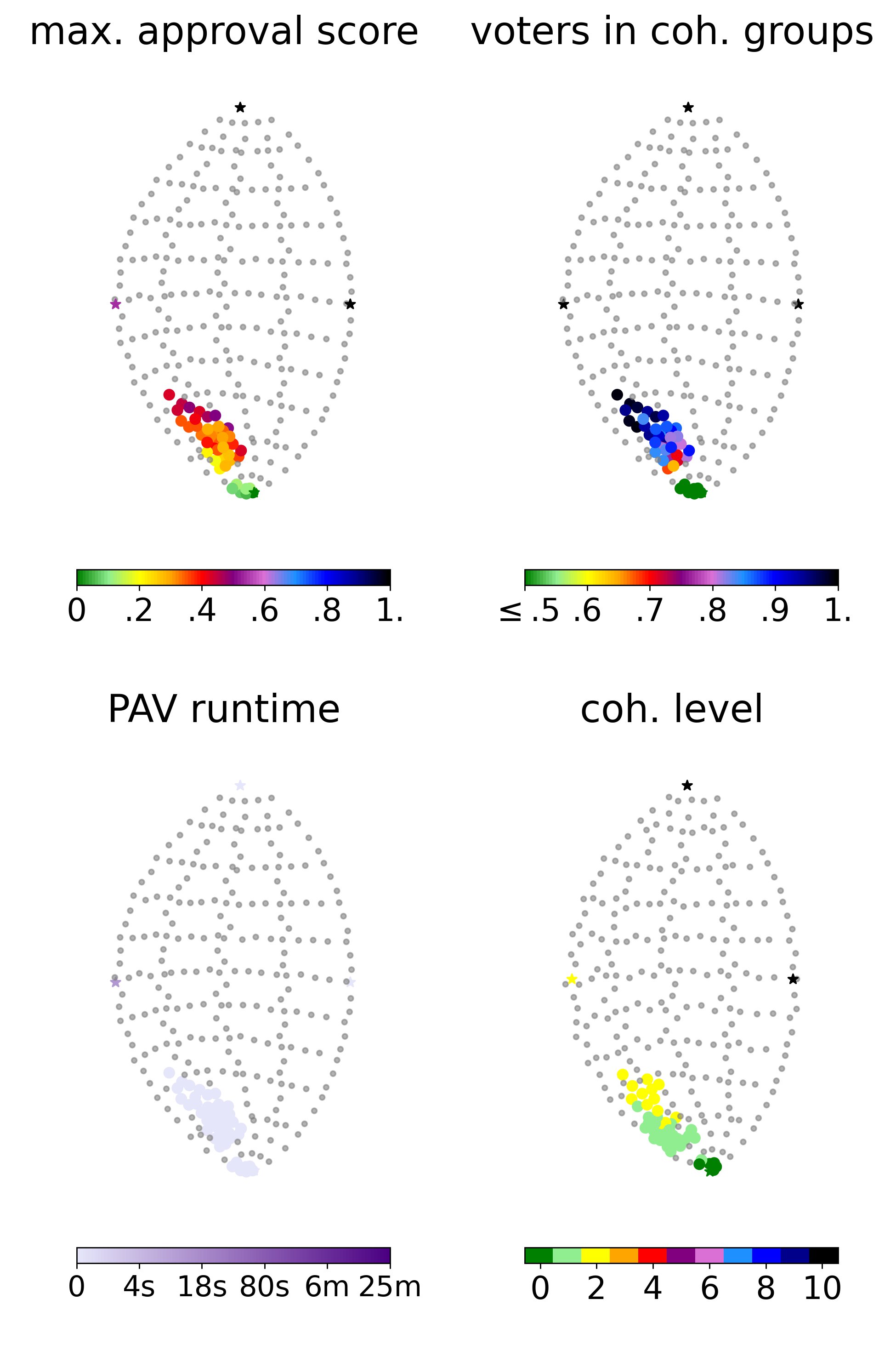}%
    % }
    
    \caption{Maps for (a)~the Euclidean Model, (b) the Disjoint
      Models, (c)~the Truncated Urn Model, and (d)~Pabulib.  For (a),
      (b), and (c), the darker a dot in the main plot, the higher is
      the value of the radius, the $\phi$ parameter, or the $\alpha$
      parameter, respectively.}
     \label{main_results_2}
\end{figure*}

\section*{Acknowledgements}
Martin Lackner was supported by the Austrian Science Fund (FWF), project P31890.
Nimrod Talmon  was supported by the Israel Science Foundation (ISF; GrantNo.630/19). 
This project has received funding from the European Research Council
(ERC) under the European Union’s Horizon 2020 research and innovation
programme (grant agreement No 101002854).
\begin{center}
  \includegraphics[width=3cm]{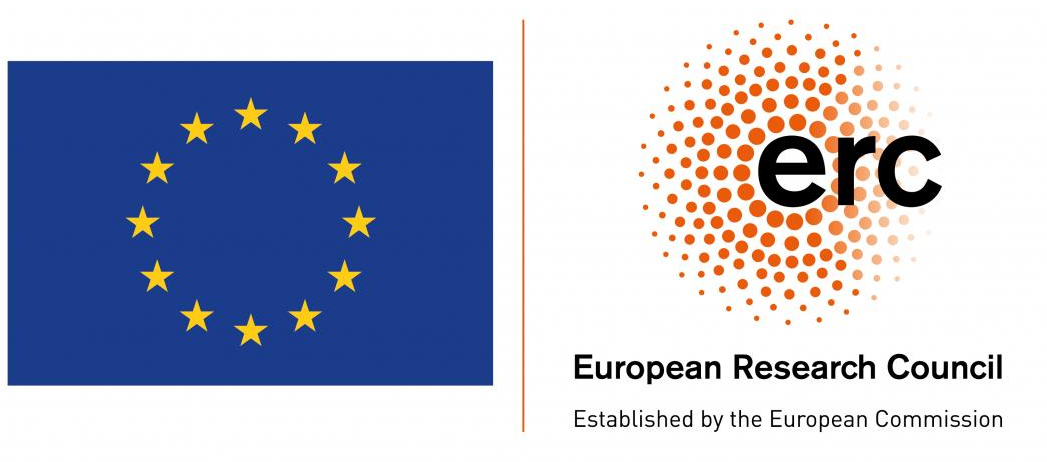}
\end{center}

\clearpage

% \bibliographystyle{named}
% \small % (if needed) we are allowed to use this for ijcai-22 
\bibliographystyle{plainnat}
\bibliography{bib.bib}

%%%%%%%%%%%%%%%%%%%%%%%%
%%%%%%%%%%%%%%%%%%%%%%%%
%%%%%%%%%%%%%%%%%%%%%%%%
%%%%%%%%%%%%%%%%%%%%%%%%
%%%%%%%%%%%%%%%%%%%%%%%%

\clearpage
%\newpage
%\quad

\appendix

\newcommand\finalwidthtmp{3.7cm}

\section{Proof of Proposition~\ref{hamming-hard}}

\begin{proof}[Proof of Proposition~\ref{hamming-hard}]
We describe a reduction from the NP-hard Clique problem.
An instance of Clique is $(G, k)$, where $G = (X, Y)$ is a graph, with set of vertices $X = \{x_1, \ldots, x_n\}$ and set of edges $Y$.

Given such an instance $(G, k)$ of Clique, create two elections, $A$ and $B$, with $n$ voters and $n$ candidates each.
The intuitive idea of the reduction is for $A$ to correspond to $G$ and for $B$ to correspond to a $k$-clique; then, a minimal Hamming distance would correspond to finding a $k$-clique in $A$. Details follows.

Denote the voters of $A$ by $\{Av_1, \ldots, Av_n\}$
and the candidates of $A$ by $\{Ac_1, \ldots, Ac_n\}$.
For each $i \in [n]$, let $Av_i$ approve $Ac_i$.
Furthermore, for each $i \in [n]$, let $Av_i$ approve each $Ac_j$ for which there is an edge $\{v_i, v_j\}$ in $G$.

Denote the voters of $B$ by $\{Bv_1, \ldots, Bv_n\}$ and the candidates of $B$ by $\{Bc_1, \ldots, Bc_n\}$.
For each $1 \leq k$, let $Bv_i$ approve $Bc_i$, and disapprove all other candidates.

Set the bound on the Hamming distance to be $D = |E| - k^2$.
This completes the description of the reduction.
Next we show the two directions for correctness.

First, assume that there is a $k$-clique in $G$. Without loss of generality let $x_{i_1}, \ldots, x_{i_k}$ be the vertices of the $k$-clique. Then, match the voters of $A$ so that $Av_j$, $j \in [k]$, is matched to $Bv_{i_j}$; and match the candidates of $A$ so that $Ac_j$, $j \in [k]$, is matched to $Bv_{i_j}$. That way, all the approvals of $B$ correspond to approvals in the resulting $A$; in particular, for each $Bv_i$ and $Bc_j$ such that $Bv_i$ approves $Bc_j$, the matched voter is approving the matched candidate. The number of approvals of the matched $A$ that correspond to disapprovals in $B$ are exactly $D$.

For the other direction, note that the Hamming distance corresponding to any matching of the voters and candidates of $A$ equals to $D' := D + Z$, where $Z$ is the number of disapprovals between the first $k$ matched voters and candidates, thus $Z = 0$ only when corresponding to a $k$-clique.
\end{proof}

\section{Cohesive Groups}

In this section we describe algorithms we used to obtain results regarding cohesive groups.
The algorithm is based on the one provided by Anonymous [2022].

\subsection{Calculating Maximum Cohesiveness Level}

We show an algorithm using Integer Linear Programming (ILP) that we used to calculate the maximum cohesiveness level of a group of voters. 

Let $E = (C, V)$ be an approval election instance and $k$ be the committee size. We ask what is the maximum value of $\ell$ such that there exists an $\ell$-cohesive group of voters, that is, a subset of at least $\nicefrac{\ell \cdot |V|}{k}$ voters that have at least $\ell$ common candidates, that is, approved by all of them.

First of all, let us point out that we can find $\ell$ by using binary search, because each $\ell$-cohesive group is also an $\ell'$-cohesive group for any $\ell' \leq \ell$. We can also search for $\ell$ by looping through each possible value, that is, from $1$ to $k$, and stop as soon as there is no cohesive group of a given the cohesiveness level. The second option may be better when we expect the cohesiveness level to be small, however, the first is better in the sense of theoretical computational complexity. In our algorithm, we focus on checking if there exists an $\ell$-cohesive group for a given cohesiveness level $\ell$.

Let us assume we are given an election $E = (C, V)$, a committee size $k$ and a cohesiveness level $\ell$. We show how to construct an ILP instance that indicates whether there exists an $\ell$-cohesive group in $E$. For the sake of brevity, we set $m = |C|$ and $n = |V|$. Furthermore, let $A$ be the binary matrix of approvals for~$E$, that is, we have $a_{ij} = 1$ if the $i$-th voter approves the $j$-th candidate, and we have $a_{ij} = 0$ otherwise.

We note that, if there exists an $\ell$-cohesive
group of any size, then there also exists an $\ell$-cohesive group of size exactly $s = \lceil \nicefrac{\ell \cdot n}{k} \rceil$, that is, with the lowest possible size (we can just remove redundant voters, because the set of commonly approved candidates would not decrease). Thus it is enough to ask whether there exists an $\ell$-cohesive group of size $s$.

To construct our ILP instance, we create the specified variables:
\begin{enumerate}
\item For each voter $v_i$, we create a binary variable $x_i$, with the intention that $x_i = 1$ if the $i$-th voter belongs to the cohesive group, and $x_i = 0$ otherwise. 
\item For each candidate $c_j$, we create a binary variable $y_i$, with the intention that $y_j = 1$ if all the selected voters (that is, specified by variables $x_1, \ldots, x_n$) approve the $j$-th candidate, and $y_j = 0$ otherwise.
\end{enumerate}
For convenience, we refer to the voters (to the candidates) whose $x_i$ ($y_j$) variables
are set to $1$ as \emph{selected}. 

Now let us specify the constraints for these variables. First of all, we need to ensure that exactly $s$ voters and at least $\ell$ candidates have been selected:
\begin{align*}
  & \textstyle \sum_{i=1}^{n} {x_i} = s, & \text{and}& &\textstyle \sum_{j=1}^{m} {y_j} \ge \ell.
\end{align*}
Furthermore, we need to ensure that each selected voter approves all the selected candidates. Thus, for each $j \in [m]$, we form a
constraint:
\[
  \textstyle \sum_{i=1}^{n} {a_{ij} \cdot x_i} \ge s \cdot y_j.
\]
Now let us show how the above inequality works. On the one hand, if the $j$-th candidate is not selected, then this inequality is
satisfied trivially, because the right-hand side is equal to $0$. On the other hand, if the $j$-th candidate is selected, then the sum on the left-hand side must be at least $s$, i.e., there must be at least $s$ selected voters who approve this candidate. Since there are exactly $s$ selected voters, all of them must approve the $j$-th candidate.

Now we see that, if there exists an assignment which satisfies the above constraints, then the selected voters form an $\ell$-cohesive group. Otherwise, there is no $\ell$-cohesive group.

\section{Correlation}\label{apdx:correlation}
The dataset used for comparing metrics in Figure~\ref{correlation} consists of: 40 elections from the Disjoint Models, 45 elections from the Noise Models with Hamming distance, 50 elections from the Truncated Urn Models, 50 elections from Euclidean Models, 134 elections from Resampling Models, 20 elections from IC,  20 elections from ID, and four extreme elections (i.e., 0.5-IC, 0.5-ID, Empty, Full).

\end{document}